\documentclass[runningheads]{llncs}
\usepackage{tikz}

\usepackage[T1]{fontenc}
\usepackage{algorithm}
\usepackage{mathtools}
\usepackage{amsmath}
\usepackage{amssymb}
\usepackage{color}
\usepackage{graphicx}
\usepackage{gastex}
\usepackage{mathbbol}
\usepackage[hidelinks]{hyperref}
\usepackage{soul}
\usepackage{comment}

 \newcommand{\red}[1]{{\color{red}{ #1}}}
 \newcommand{\pf}[1]{\operatorname{Pref}(#1)}

\usetikzlibrary{arrows,automata,shapes,positioning}
\tikzset{every state/.style={minimum size=0pt}}

\begin{document}
\title{Finite-State Transducers in the Wheeler Setting}
%
%
\author{Giovanna D'Agostino \inst{1}\orcidID{0000-0002-8920-483X} \and \\Andrea Paradiso\inst{1}\orcidID{0000-0002-3614-2487} 
}
\authorrunning{G. D'Agostino and A. Paradiso}
%
\institute{University of Udine (DMIF), Italy}
\maketitle       
%

%
%

\section{Introduction}



In this work  we take a preliminary step toward understanding the connections between  Wheeler automata and finite-state transducers. 

Wheeler graphs form a class of edge-labeled directed graphs that admit a sorting of their nodes that is consistent with the co-lexicographic order of the labels of paths ending at them. A finite-state automaton whose underlying graph is a Wheeler graph is called a Wheeler automaton, and the languages recognized by such automata are called Wheeler languages~\cite{alanko_wheeler_2021}.  Wheeler languages form a proper subclass of the regular languages; in fact, they are a subclass of the star-free languages. They have attracted considerable interest thanks to structural properties enabling highly efficient indexing, pattern matching and compression~\cite{10.1093/bib/bbw089,backurs2016regularexpressionpatternshard,GAGIE201767}. They provide a formal automata-theoretic foundation for data structures inspired by the Burrows--Wheeler transform, bridging language theory and practical text indexing.
For this reason, Wheeler languages are important both as a mathematically natural subclass of regular languages and as a model for highly efficient algorithms on strings, graphs, and automata.


A natural way to deepen our understanding of a language class is to study the transformations that preserve it, either under direct image or under inverse image. This perspective has proved fruitful in the regular setting, where rational functions definable by  finite-state   subsequential transducers   (FSTs, \cite{filiot_transducers_2016})  are known to preserve regularity under both images and inverse images, and where algebraic tools, such as syntactic invariants associated with transductions, provide structural information about the transformations involved.   Subsequential transducers are a fundamental model for deterministic input-to-output transformations, capturing computations in which the output is produced online while the input is read from left to right.
They are important both theoretically, as a well-behaved subclass of finite-state transductions, and practically, as efficient models for tasks such as text processing, coding, normalization, and stream transformation (see~\cite{muscholl_many_2019} for a recent survey).

Identifying classes of transducers that are compatible with the Wheeler structure may therefore help clarify the boundaries of Wheeler languages and provide a more dynamic, transformation-oriented view of this class.  In this work we define a subclass of  transducers and argue that 
it is a natural class of transducers to consider in the setting of Wheeler languages. What properties should such a class have? 
We would certainly want it to be closed under composition, and, with respect to Wheeler languages, we would want it to be at least Wheeler-continuous, in the sense that the preimage of a Wheeler language for a transducer in the class should remain a Wheeler language \footnote{see \cite{Boj} for an  in-depth  discussion on the relevance of continuity for transducers}. In this work we  indeed prove that our class is closed under composition and that the computed function are  Wheeler-continuous. Moreover we give a machine-independent characterization of the functions computed by the class.

\section{Preliminaries}
We first fix some conventions before introducing Wheeler transducers in detail. Throughout the paper, whenever an alphabet $\Sigma$ is given, it is assumed to be equipped with an order denoted by $\prec$. Moreover, given two strings $\alpha = a_1a_2\dots a_n$ and $\beta = b_1b_2\dots b_m$ in $\Sigma^*$, their co-lexicographic order is defined by overloading the symbol $\prec$: 
  \begin{align*}
\alpha \prec \beta \iff & \bigl(n < m \land (\forall j < n)(a_{n-j} = b_{m-j})\bigr) \lor \\
&(\exists i)(a_{n-i}\prec b_{m-i} \land (\forall j<i)(a_{n-j} = b_{m-j}))
  \end{align*}

  \subsection{Automata and Transducers}
  We assume that the notions of deterministic (DFA)/nondeterministic (NFA) automata are known to the reader. Given a deterministic finite-state automaton $A = (Q,s,\Sigma,\delta, F)$   over an ordered alphabet $(\Sigma, \prec)$, a natural partial order on its states can be defined in terms of the strings reaching them. For each state $q \in Q$, let $I_q = \{\alpha \in \Sigma^* \mid \delta(s,\alpha) = q\}$ denote the set of strings reaching $q$. The co-lexicographic partial order $<_A$  on the set of states  of $A$ is defined as follows: for states $p,q \in Q$, we set $$p <_{A}q \iff \forall \alpha \in I_p \forall \beta \in I_q (\alpha\prec \beta ) $$ A \emph{Wheeler automaton} is then defined as a deterministic finite-state automaton in which the order $<_{A}$ is a total order\footnote{There is also a notion of nondeterministic Wheeler automata, which we omit for space reasons}.

We next consider the transducer notion. A one-way \emph{subsequential transducer} (which we simply call a transducer in this paper) computes a partial function from $\Sigma^*$ to $\Gamma^*$, where $\Sigma$ and $\Gamma$ are the input and output alphabets, respectively.
 It is defined as a tuple
$T = (Q, s,  \Sigma, \Gamma, \delta, \omega, F,t)$,
where:
\begin{enumerate}
    \item $A_T = (Q, s,  \Sigma, \delta,F)$ is the underlying DFA,
    \item the partial function $\omega : Q \times \Sigma \to \Gamma^*$ has the same domain as $\delta$,
    \item  $t: F \to \Gamma^*$ is the terminal function, which associates with each final state a string to be concatenated to the output produced while reading an accepted input string.
\end{enumerate} 
We call  $\delta$ and $ \omega$  the \emph{input} and \emph{output} transition functions, respectively, and we    extend them   as usual to $Q\times \Sigma^*$. If $\delta(q,a)=p$ and $\omega(q,a)=\gamma$ we also say that the transition $(q,a,\gamma,p)$ belongs to $T$.  

Given a   transducer $T$, we denote by $||T||: \Sigma^* \rightarrow \Gamma^*$ the partial function \emph{computed by } $T$, that is, the function $\alpha \mapsto \omega(s, \alpha) t(\delta(s,\alpha))$,   where $\alpha\in \Sigma^* $ is accepted by the underlying DFA $A_T$  (see~\cite{muscholl_many_2019} for  formal definitions).  A function  $f: \Sigma^* \rightarrow \Gamma^*$ is called a \emph{subsequential} function if there exists a transducer $T$ such that $||T||=f$. 
In the following, we always assume w.l.o.g. that the underlying DFA $A_T$ of a transducer $T$ is trimmed, that is, any state is accessible from the initial state and reaches a final state, by reading input words.

\begin{example}
 For the transducer in Fig. \ref{exampleWheelerTrans} we have  \(dom(||T||)=aab(ab)^*d+cab(ab)^*d\), 
\( ||T||(aab(ab)^kd)=000(010)^k01, ~~||T||(cab(ab)^kd)=110 (110)^k11\).  
\end{example}

 \begin{figure}[t]
 \begin{center}
\begin{tikzpicture}[
  ->,
  node distance=1cm,
  every state/.style={circle, draw, minimum size=8mm},
  initial text={}
]

\node[state, initial] (q0) {$q_0$};
\node[state, above right=of q0] (q1) {$q_1$};
\node[state, right=of q1] (q2) {$q_2$};
\node[state, right=of q2] (q3) {$q_3$};
 \node[state, accepting] (q7) at (7,0) {$q_7$};
\node[state, below=2.1cm of q1] (q4) {$q_4$};
\node[state, right=of q4] (q5) {$q_5$};
\node[state, right=of q5] (q6) {$q_6$};
 \node[ right=of q7 ] (q8) {};
\path
(q0) edge node[left] { $a/0$} (q1)
(q1) edge node[above] {$a/0$} (q2)
(q2) edge[bend left=20] node[above] {$b/0$} (q3)
(q3) edge[bend left =20] node[below] {$a/01$} (q2)
 (q3) edge node[above] {$d/0$} (q7)
(q0) edge node[left] { $c/1$} (q4)
(q4) edge[ ] node[above] {$a/1$} (q5)
(q5) edge[bend left=20] node[above] {$b/0$} (q6)
(q6) edge[bend left=20] node[below] {$a/11$} (q5)
(q7)edge node[above] {$ 1$} (q8) 
 (q6) edge node[above] {$d/1$} (q7);
\end{tikzpicture}
\caption{An example of a   (Wheeler) transducer. The terminal function $t$ is defined as: $t(q_7) = 1$}
  \label{exampleWheelerTrans}
\end{center}
\end{figure}

Two transducers $T,T'$ are equivalent if they compute the same function. 
As proved in \cite{CHOFFRUT2003131}, the  class of transducers introduced above enjoys a minimization property. Given a partial function $f:\Sigma^* \rightarrow \Gamma^*$, Choffrut defined   a syntactic congruence $\sim_f$  with the property that $f$ is computed by a   transducer if and only if $\sim_f$ has finite index. 
 Moreover, the equivalence $\sim_f$ allows to define a "minimum" transducer computing the function $f$. 
 In order to define the congruence $\sim_f$, one first defines a   function $\widehat f:\Sigma^* \rightarrow \Gamma^*$ which, given a word $\alpha$, outputs the longest common prefix of all words $f (\alpha \beta)$ for all continuations
$\beta$ with $\alpha \beta \in dom (f)$. 
More precisely, for \(\alpha\in \pf{dom(f)}\), let
\[ \widehat f(\alpha)= \bigwedge_{\alpha \beta \in dom (f) }f(\alpha\beta)
\]
Then $\widehat f$ is used to define the following equivalence:  
\begin{definition}\label{def:synct_congruence} The syntactic congruence \(\sim_f \)  of a partial function  $f:\Sigma^* \rightarrow \Gamma^*$  is  defined on $\pf{dom(f)}$   by:
\[
    \alpha \sim_f \alpha' \Leftrightarrow \forall \beta   \left ( \widehat f (\alpha)^{-1} f(\alpha \beta) =\widehat f (\alpha')^{-1} f(\alpha' \beta)\right )
\]
\end{definition} 
 
\begin{remark}  \label{rem:eq_convex}
The equality on the right must be seen as an equality between partial function, so that if $\alpha \sim_f \alpha'$ then, for all $\beta \in \Sigma^*$ we have  $\alpha\beta \in dom(f) \iff  \alpha'\beta \in dom(f)$. 
Moreover, if \(\alpha\sim_f\alpha'\) and \(\alpha\beta, \alpha'\beta \in {dom}(f)\),
then   there exists
\(x\in\Gamma^*\) such that
\[
f(\alpha\beta)=\widehat f(\alpha)x
\qquad\text{and}\qquad
f(\alpha'\beta)=\widehat f(\alpha')x.
\]
\end{remark}

Using the equivalence $\sim_f$,  the (possibly infinite state)  transducer $T_f$ associated with the partial function $f$ is then obtained as follows:

\begin{definition}\label{def:Choffrut}
   If $f:\Sigma^*\rightarrow \Gamma^*$  is a partial function then  
\[
T_f=(Q_f,s_f,\delta_f,\omega_f,F_f,t_f),
\]
where $s_f$ is a new symbol and 
\begin{itemize}
    \item $Q_f=\{s_f\} \cup \{[\alpha]_f : \alpha\in\operatorname{Pref}(\operatorname{dom}(f))\}$;
    \item $\delta_f(s_f,a)=[a]_f$,    if $a\in \pf{dom(f)}$;
    \item $\delta_f([\alpha]_f,a)=[\alpha a]_f$,  if $\alpha a\in \pf{dom(f)}$;
    \item $\omega_f(s_f,a)=  \widehat f (a)$,   if $a\in \pf{dom(f)}$;
    \item $\omega([\alpha]_f,a)=\widehat f(\alpha)^{-1}\widehat f(\alpha a)$,   if $\alpha a\in \pf{dom(f)}$;
    \item   $F_f=\{[ \alpha]_f : \alpha\in\operatorname{dom}(f)\}\cup \begin{cases}
        \{s_f\}~\text{if}~ \epsilon \in dom(f);\\
        \emptyset ~~~~\text{otherwise.}
    \end{cases}$
    \item $t_f([\alpha]_f)=\widehat f(\alpha)^{-1}f(\alpha)$;
    \item  $t_f(s_f)=\begin{cases} f(\epsilon),~~~~~~~\text{if}~ \epsilon\in dom(f);\\
    \text{undefined,  otherwise.}
    \end{cases} $
\end{itemize}
\end{definition}

\begin{lemma} (see \cite{CHOFFRUT2003131})
    If $f:\Sigma^*\rightarrow \Gamma^*$  is a partial function then  
\(
T_f 
\)    is well defined and realizes the function $f$.
\end{lemma}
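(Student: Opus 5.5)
The plan has two parts: first show that every component of $T_f$ in Definition~\ref{def:Choffrut} is independent of the chosen representative (well-definedness), then show by induction on the input length that $T_f$ computes $f$. Throughout, $\wedge$ denotes the longest common prefix and $x^{-1}y$ the left quotient, defined when $x$ is a prefix of $y$.

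\textbf{Preliminary facts on $\widehat f$.} For $\alpha\in\pf{dom(f)}$ the set $\{f(\alpha\beta):\alpha\beta\in dom(f)\}$ is nonempty, so $\widehat f(\alpha)$ is defined. If $\alpha a\in\pf{dom(f)}$, then $\{f(\alpha a\beta)\}$ is a subset of $\{f(\alpha\beta)\}$. Hence $\widehat f(\alpha)$ is a prefix of $\widehat f(\alpha a)$, so the quotient $\widehat f(\alpha)^{-1}\widehat f(\alpha a)$ exists. For the same reason, $\widehat f(\alpha)$ is a prefix of $f(\alpha)$ when $\alpha\in dom(f)$. Consequently the output and terminal values make sense once we fix a representative $\alpha$.

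\textbf{Independence of representatives.} This is the step I expect to carry the real content. Suppose $\alpha\sim_f\alpha'$. For the transition function, I would check that $\alpha a\sim_f\alpha' a$; this needs a compatibility law for $\sim_f$ under right extension. By Remark~\ref{rem:eq_convex}, for every $\beta$ with $\alpha a\beta\in dom(f)$ we can write $f(\alpha a\beta)=\widehat f(\alpha)x_\beta$ and $f(\alpha' a\beta)=\widehat f(\alpha')x_\beta$. Taking the longest common prefix over all such $\beta$ gives $\widehat f(\alpha a)=\widehat f(\alpha)\,y$ and $\widehat f(\alpha' a)=\widehat f(\alpha')\,y$ with the same $y=\bigwedge_\beta x_\beta$. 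Here we use that $\wedge$ commutes with left concatenation by a fixed word. Cancelling then gives $\widehat f(\alpha a)^{-1}f(\alpha a\beta)=y^{-1}x_\beta=\widehat f(\alpha' a)^{-1}f(\alpha' a\beta)$. Membership in the domain agrees for both sides by the Remark, so $\alpha a\sim_f\alpha' a$. The same computation shows that the output label $\widehat f(\alpha)^{-1}\widehat f(\alpha a)=y$ does not depend on the representative. For the final states and $t_f$, take $\beta=\epsilon$ in the definition of $\sim_f$: we get $\alpha\in dom(f)$ iff $\alpha'\in dom(f)$, and $\widehat f(\alpha)^{-1}f(\alpha)=\widehat f(\alpha')^{-1}f(\alpha')$. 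Finally, the state $s_f$ is a fresh symbol, and its clauses involve no choice, so they are well defined.

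\textbf{Correctness.} I would prove by induction on $|\alpha|\ge1$, for $\alpha\in\pf{dom(f)}$, that $\delta_f(s_f,\alpha)=[\alpha]_f$ and $\omega_f(s_f,\alpha)=\widehat f(\alpha)$. The base case $\alpha=a$ is exactly the clauses for $s_f$. In the inductive step, the output telescopes: $\widehat f(\alpha)\cdot\widehat f(\alpha)^{-1}\widehat f(\alpha a)=\widehat f(\alpha a)$. Then for $\alpha\in dom(f)$ with $\alpha\neq\epsilon$, the computed output is $\widehat f(\alpha)\,\widehat f(\alpha)^{-1}f(\alpha)=f(\alpha)$, and $\epsilon$ is handled by $t_f(s_f)$. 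Conversely, every run that reaches a final state does so on a word $\alpha$ with $[\alpha]_f\in F_f$, so $\alpha\in dom(f)$ by the class-invariance shown above. Therefore $\|T_f\|=f$ exactly as partial functions.
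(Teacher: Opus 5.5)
Your proposal is correct and follows the same route as the paper's sketch: right invariance of $\sim_f$ makes $\delta_f$ well defined, the definition of $\sim_f$ makes $\omega_f$ and $t_f$ well defined, and the outputs telescope to $f(\alpha)$. You also supply the details the paper omits, namely the actual right-invariance argument via the common suffix $y=\bigwedge_\beta x_\beta$ and the verification that $\|T_f\|$ and $f$ have the same domain.
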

\begin{proof} (Sketch)
First, the 
right invariance of $\sim_f$ is proved and this implies  that $\delta_f$ is well defined; then the definition of $\sim_f$ implies that $\omega_f$ and $t_f$ are well defined.

Let $\alpha =a_1\cdots a_n$ and let $q_i=[a_1\cdots a_i]_f$ for $i=1,\dots,n$.
The output produced by $T_f$ on $\alpha$ is
\begin{align*}
||T_f||(\alpha) &=
\omega_f(s_f,a_1)\omega_f(q_1,a_2)\cdots \omega_f(q_{n-1},a_n)\,t_f(q_n)\\
&=
\widehat f(a_1)\,
(\widehat f(a_1)^{-1}\widehat f(a_1a_2))\cdots
(\widehat f(a_1\cdots a_{n-1})^{-1}\widehat f(\alpha))\,
(\widehat f(\alpha)^{-1}f(\alpha)) \\
&= f(\alpha).
\end{align*}

since all the intermediate factors cancel. Hence $\|T_f\|(\alpha)=f(\alpha)$ for every $\alpha\in\Sigma^*$.  \hfill $\Box$
\end{proof}

Given a transducer $T$, the equivalence $\sim_T$ on $\pf{dom(||T||)}$ is defined as  follows
\begin{equation}\label{eq:simT}
 \alpha \sim_T \alpha' \iff \delta_T(s,\alpha)=\delta_T(s,\alpha')   
\end{equation}

Then it is easy to see that if $f=||T||$ is the function computed by $T$ then 
$\sim_T$ is a refinement of $\sim_{f}$, that is 
\[
\alpha \sim_T \alpha' \implies \alpha \sim_f \alpha'.
\]

It follows:
\begin{corollary} A function $f$ is subsequential iff $\sim_f$ has finite index. 
    \end{corollary}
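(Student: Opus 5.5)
The corollary has two directions, and each follows from a result already stated: the realization lemma for $T_f$ and the refinement property $\sim_T \subseteq \sim_f$.

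\emph{From finite index to subsequential.} Suppose $\sim_f$ has finite index. Then the state set $Q_f=\{s_f\}\cup\{[\alpha]_f:\alpha\in\pf{dom(f)}\}$ of $T_f$ (Definition~\ref{def:Choffrut}) is finite. By the lemma from \cite{CHOFFRUT2003131}, $T_f$ is well defined and realizes $f$. The remaining data ($\delta_f,\omega_f,t_f,F_f$) are partial functions on finitely many states with values in $\Gamma^*$, so $T_f$ is a transducer in the sense of the paper, and $f=||T_f||$ is subsequential.

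\emph{From subsequential to finite index.} Suppose $f=||T||$ with $T$ a trimmed transducer having finitely many states $Q$. Consider the map $\pf{dom(f)}\to Q$ given by $\alpha\mapsto\delta_T(s,\alpha)$. It is defined on every $\alpha\in\pf{dom(f)}$: such an $\alpha$ extends to a word $\alpha\beta$ accepted by the deterministic $A_T$, so the run on $\alpha$ exists. The classes of $\sim_T$ are exactly the nonempty fibres of this map, so $\sim_T$ has at most $|Q|$ classes. Since $\sim_T$ refines $\sim_f$, every $\sim_f$-class is a union of $\sim_T$-classes. Hence the index of $\sim_f$ is at most $|Q|$, which is finite.

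The only step that needs care is the refinement $\alpha\sim_T\alpha'\Rightarrow\alpha\sim_f\alpha'$, which the paper states as "easy". If I had to justify it, I would take $q=\delta_T(s,\alpha)=\delta_T(s,\alpha')$ and argue as follows.

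First, for every $\beta$ we have $f(\alpha\beta)=\omega(s,\alpha)\,g(\beta)$ and $f(\alpha'\beta)=\omega(s,\alpha')\,g(\beta)$, where $g(\beta)=\omega(q,\beta)t(\delta(q,\beta))$. In particular $\alpha\beta$ and $\alpha'\beta$ have the same domain behaviour.

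Second, taking the longest common prefix over all $\beta$ in the domain gives $\widehat f(\alpha)=\omega(s,\alpha)\,\widehat g$ and $\widehat f(\alpha')=\omega(s,\alpha')\,\widehat g$, where $\widehat g=\bigwedge_\beta g(\beta)$. This uses the fact that a common prefix factors out of a longest common prefix.

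Third, it follows that $\widehat f(\alpha)^{-1}f(\alpha\beta)=\widehat g^{-1}g(\beta)=\widehat f(\alpha')^{-1}f(\alpha'\beta)$ for every $\beta$, which is exactly $\alpha\sim_f\alpha'$.
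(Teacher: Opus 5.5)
Your proof is correct and follows the same route as the paper. The paper derives the corollary directly from the realization lemma for $T_f$, since finite index makes $Q_f$ finite, and from the refinement $\alpha\sim_T\alpha'\Rightarrow\alpha\sim_f\alpha'$, which bounds the index of $\sim_f$ by $|Q|$. Your justification of that refinement, via the factorization $\widehat f(\alpha)=\omega(s,\alpha)\,\widehat g$, is the same computation $\widehat f(\alpha)=\omega(s,\alpha)y$ that the paper carries out later, in its remark after the definition of $\sim^c_f$.
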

 
If $T$ is a transducer and    $\sim$ is a right-invariant equivalence of finite index  which is a  refinement of $\sim_T$ we consider  the following construction of a transducer $T_\sim$ based on the equivalence $\sim$:

\begin{definition}
    Let $T=(Q ,s , \delta , \omega , F , t )$ be a transducer. If $\sim$ is a right invariant equivalence, of finite index, which is a refinement of  $\sim_T$, then we define a  quotient  transducer $T_\sim=(Q_\sim, s_\sim,\delta_\sim,\omega_\sim,F_\sim,t_\sim)$ as follows:
\begin{itemize}
    \item the set of states $Q_\sim$ are the $\sim$-equivalence classes $[\alpha]_\sim$;
    \item $s_\sim=[\epsilon]_\sim$ is the initial state;
    \item $\delta_\sim([\alpha]_\sim,a)=[\alpha a]_\sim$;
    \item $\omega_\sim([\alpha]_\sim,a)=\omega(\delta (s , \alpha),a)$;
    \item $F_\sim=\{[\alpha]_\sim\; :\; \delta (s, \alpha)\in F \}$;
    \item $t_\sim([\alpha]_\sim)=t (\delta (s, \alpha))$.
\end{itemize} 
\end{definition}
In this paper we shall use the following result:
\begin{lemma} \label{lem:sim_quotient}$T=(Q ,s , \delta , \omega , F , t )$ be a transducer, let $\sim$ be a right invariant equivalence, of finite index, which is a refinement of  $\sim_T$.  Then the quotient transducer      $T_\sim$ is well defined  and  for all $\alpha, \beta$ with $\alpha \beta \in \pf{dom(||T||)}$ it holds: 
 \[
\omega_\sim([\alpha]_\sim, \beta)=\omega(\delta(s,\alpha), \beta) 
\]
Moreover, the transducer $T_\sim$ is equivalent to $T$. 
\end{lemma}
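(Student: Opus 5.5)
The plan is to use one observation throughout: because $\sim$ refines $\sim_T$, any two $\sim$-equivalent words $\alpha \sim \alpha'$ lead to the same state of $T$, i.e. $\delta(s,\alpha)=\delta(s,\alpha')$. Every clause in the definition of $T_\sim$ is phrased through a representative $\alpha$ only via $\delta(s,\alpha)$, so this one fact should settle well-definedness.

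First I will check well-definedness clause by clause. For $\delta_\sim$, right invariance of $\sim$ gives $\alpha a \sim \alpha' a$ whenever $\alpha\sim\alpha'$, so $[\alpha a]_\sim$ does not depend on the representative. I will also note that $\alpha a\in\pf{dom(||T||)}$ iff $\alpha' a\in\pf{dom(||T||)}$. This holds because $A_T$ is trimmed: membership of $\alpha a$ in $\pf{dom(||T||)}$ is equivalent to $\delta(\delta(s,\alpha),a)$ being defined, and that depends only on the common state. The clauses for $\omega_\sim$, $F_\sim$ and $t_\sim$ are stated directly in terms of $\delta(s,\alpha)$, so they are well defined by the observation. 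Finite index makes $Q_\sim$ finite, so $T_\sim$ is a genuine transducer.

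Next comes the displayed identity, which I will prove by induction on $|\beta|$. For $\beta=\epsilon$ both sides equal $\epsilon$. For the step $\beta=\beta' a$ with $\alpha\beta' a\in\pf{dom(||T||)}$, I expand the extended output function:
\[
\omega_\sim([\alpha]_\sim,\beta' a)=\omega_\sim([\alpha]_\sim,\beta')\,\omega_\sim(\delta_\sim([\alpha]_\sim,\beta'),a).
\]
By the definition of $\delta_\sim$ and an easy induction, $\delta_\sim([\alpha]_\sim,\beta')=[\alpha\beta']_\sim$. So the last factor is $\omega(\delta(s,\alpha\beta'),a)=\omega(\delta(\delta(s,\alpha),\beta'),a)$. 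Combining this with the induction hypothesis for $\beta'$, the product equals $\omega(\delta(s,\alpha),\beta' a)$ by the definition of the extended $\omega$. Here I use that prefixes of words in $\pf{dom(||T||)}$ stay in it, so the hypothesis applies.

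Finally, for equivalence, I take $\alpha=\epsilon$ in the identity, which gives $\omega_\sim(s_\sim,\beta)=\omega(s,\beta)$. I also have $\delta_\sim(s_\sim,\beta)=[\beta]_\sim$. By definition, $[\beta]_\sim\in F_\sim$ iff $\delta(s,\beta)\in F$, so the two transducers have the same domain. On that domain, $t_\sim([\beta]_\sim)=t(\delta(s,\beta))$, so
\[
||T_\sim||(\beta)=\omega(s,\beta)\,t(\delta(s,\beta))=||T||(\beta).
\]

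The main (mild) obstacle is keeping track of domains, since $\sim$ is only defined on $\pf{dom(||T||)}$. I must ensure $\delta_\sim$ is defined exactly when $\delta$ is, and that is exactly where trimness of $A_T$ is used.
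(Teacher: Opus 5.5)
Your proposal is correct and follows the paper's proof essentially step by step. Well-definedness comes from right invariance together with $\sim$ refining $\sim_T$, the displayed identity is proved by induction on $\beta$, and equivalence follows by specializing to $\alpha=\epsilon$. Your explicit use of trimness to show that $\delta_\sim$ is defined exactly when $\delta$ is fills in a domain detail that the paper leaves implicit.
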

\begin{proof}
The functions $\delta_\sim $ is well defined because $\sim$ is right invariant. As for $ \omega_\sim$, given two strings $\alpha,\beta \in \pf{dom(||T||)}$ such that $[\alpha]_\sim = [\beta]_\sim$, since $\sim$ is a refinement of $\sim_T$, we have $\delta(s,\alpha) = \delta(s,\beta)$; hence 
$\omega_\sim([\alpha]_\sim,a) = \omega(\delta(s,\alpha),a) = \omega(\delta(s,\beta),a) = \omega_\sim([\beta]_\sim,a)$ holds. Similarly, we easily prove that $t$ and $F_\sim$ are well defined (in the sense that if $\delta(s,\alpha)\in F$, then $[\alpha]_\sim \subseteq F).$ 

Moreover, one can prove by induction on $\beta$ that 
 \[
\omega_\sim([\alpha]_\sim, \beta)=\omega(\delta (s,\alpha), \beta).
\]
  We now  prove that $||T||$ and  $||T_\sim||$ have the same domain and, given a word $\alpha \in dom(||T||)$, then $||T||(\alpha) = ||T_\sim||(\alpha)$. The first part follows from the definition of $F_\sim$. Moreover, we have: 
    \begin{align*}
    \|T_\sim\|(\alpha)
    &=\omega_\sim ([\epsilon]_\sim,\alpha)\,
      t_\sim (\delta_\sim ([\epsilon]_\sim,\alpha))\\
    &=\omega(s,\alpha)\,
      t(\delta(s,\alpha))\\
    &=\|T\|(\alpha).
    \end{align*}
    Therefore $T_\sim$ and $T$ are equivalent. \hfill $\Box$
\end{proof}
\subsection{Properties of  piecewise monotone functions}

As we shall see, our class of transducers relies on the notion of \emph{piecewise monotone function}. 

\begin{definition}\label{def:piecewise} A partial function 
  $f:\Sigma^*\rightarrow \Gamma^*$  is   \emph{piecewise monotone} if there is a finite   partition 
$\mathcal C=\{C_1, \dots,C_n\}$ of $dom(f)$ consisting of convex sets such that  $f$ is monotone, either nondecreasing or nonincreasing,  on each  $C_i$.
In this case we  say that the convex partition $\mathcal C$ \emph{realizes} the piecewise monotonicity of $f$. 
\end{definition}

\begin{lemma}\label{lem:comp_piecewise}
    Piecewise monotone functions are closed under composition 
\end{lemma}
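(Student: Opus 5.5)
Let $f:\Sigma^*\rightarrow\Gamma^*$ and $g:\Gamma^*\rightarrow\Delta^*$ be piecewise monotone, with convex partitions $\mathcal C=\{C_1,\dots,C_n\}$ of $dom(f)$ and $\mathcal D=\{D_1,\dots,D_m\}$ of $dom(g)$ realizing their piecewise monotonicity (Definition~\ref{def:piecewise}). We must show that $g\circ f$, with domain $dom(g\circ f)=\{\alpha\in dom(f): f(\alpha)\in dom(g)\}$, is piecewise monotone. The plan is to refine $\mathcal C$ by pulling back $\mathcal D$ along $f$ on each $C_i$, and to check that the resulting pieces are convex subsets of $dom(g\circ f)$ on which the composite is monotone.

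For each pair $(i,j)$ put $E_{ij}=\{\alpha\in C_i : f(\alpha)\in D_j\}$. These sets are pairwise disjoint and cover $dom(g\circ f)$, and there are at most $nm$ of them. The sets $E_{ij}$ need not be convex in $\Sigma^*$, but convexity must only be understood relative to the domain, and this is the point of the argument. If $\alpha\prec\beta\prec\gamma$ with $\alpha,\gamma\in E_{ij}$ and $\beta\in dom(g\circ f)$, then $\beta\in C_i$ by convexity of $C_i$ in $dom(f)$. Since $f$ is monotone on $C_i$, $f(\beta)$ lies between $f(\alpha)$ and $f(\gamma)$ in the co-lexicographic order. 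Both of these lie in $D_j$, $f(\beta)\in dom(g)$, and $D_j$ is convex in $dom(g)$, so $f(\beta)\in D_j$, i.e.\ $\beta\in E_{ij}$. Hence each nonempty $E_{ij}$ is convex relative to $dom(g\circ f)$.

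Monotonicity on $E_{ij}$ is then a sign computation. If $\alpha\preceq\beta$ in $E_{ij}$, then $f(\alpha),f(\beta)$ are comparable in the direction fixed by the behaviour of $f$ on $C_i$, and both lie in $D_j$. Applying $g$ preserves or reverses this according to the behaviour of $g$ on $D_j$. So $g\circ f$ is nondecreasing on $E_{ij}$ when the two monotonicities agree and nonincreasing when they differ. A function that is constant on a piece fits either case.

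The main obstacle is the meaning of ``convex''. If Definition~\ref{def:piecewise} intends convexity in all of $\Sigma^*$ rather than relative to the domain, the sets $E_{ij}$ can fail to be convex. In that case I would replace each $E_{ij}$ by its convex hull in $\Sigma^*$ intersected with $dom(g\circ f)$. By the argument above this intersection equals $E_{ij}$, so the partition is unchanged; only the convexity claim is read relatively. I would state the lemma under the relative reading, which is the one consistent with partitioning $dom(f)$ into convex sets, and remark that it is the reading used throughout.
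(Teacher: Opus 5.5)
Your proof is correct and follows the paper's argument. The paper also partitions $dom(g\circ f)$ into the nonempty sets $C_i\cap f^{-1}(D_j)$, but it only asserts their convexity and the monotonicity of $g\circ f$ on them, while you verify both; in particular you correctly take convexity relative to $dom(g\circ f)$, since these sets need not be convex in $dom(f)$.

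Your closing worry is unfounded, however. Under the absolute reading the same argument shows that $E_{ij}$ is convex in $\Sigma^*$: convexity of $C_i$ in $\Sigma^*$ puts the middle word in $C_i\subseteq dom(f)$, and convexity of $D_j$ in $\Gamma^*$ puts its image in $D_j\subseteq dom(g)$. So no fallback is needed, and the hull-intersection step you propose would leave the sets $E_{ij}$ unchanged in any case.
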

\begin{proof}
Suppose $f:X\rightarrow Y$ and $g:Y\rightarrow Z$ are piecewise monotone w.r.t. convex $C_1,\dots,C_h$, $D_1, \dots, D_k$, respectively. 
Then, for all $i,j$, the sets  \(C_i\cap f^{-1}(D_j)\) are convex and   $g\circ f$ is piecewise monotone w.r.t. the non-empty set of the form \(C_i\cap f^{-1}(D_j)\)
    \end{proof}

\section{Wheeler Transducers}
We introduce the notion of deterministic finite-state Wheeler transducers and study their fundamental properties. We first show that the class is closed under composition, and we also prove that Wheeler languages are closed under inverse images of Wheeler transductions. We then turn to minimization: building on Choffrut's theory of minimal sequential transducers~\cite{CHOFFRUT2003131}, we introduce a refinement of the syntactic Choffrut equivalence, denoted $\sim_f^c$, and show that it characterizes exactly the functions realizable by a Wheeler  transducer. Finally, we provide a machine-independent characterization of   functions  computed by  Wheeler transducers  in terms of monotonicity.

  We first present our proposal for the class of Wheeler transducers.

  \begin{definition} \label{def:WheelerTransducers}
  A transducer $T = (Q, s, \Sigma, \Gamma, \delta, \omega, F, t)$ is said to be
a Wheeler transducer if, denoting by $<_{\mathrm{in}}$ the input order
$<_{A_T}$ on $Q$ induced by the underlying DFA
$A_T = (Q, s, \Sigma, \delta, F)$, the following conditions hold:
\begin{enumerate}
     \item the order $<_{\mathrm{in}}$ is total on $Q$;
\item   for every state $q \in Q$, the function $\omega(s,\alpha)$
   is piecewise monotone  on $I_q$.
    \end{enumerate} 
 If \(T\) is a Wheeler transducer, then the  function $||T||$ is called   \emph{a Wheeler function}.
\end{definition}

An example of a Wheeler transducer is shown in Fig.~\ref{exampleWheelerTrans}, where the 
co-lex order on states is $q_0 <_{\mathrm{in}} q_1 <_{\mathrm{in}} q_2 <_{\mathrm{in}} q_5<_{\mathrm{in}} q_3<_{\mathrm{in}} q_6<_{\mathrm{in}} q_4<_{\mathrm{in}} q_7$  (a total order) and one can verify that condition 2 of Definition \ref{def:WheelerTransducers} is satisfied  because, for each state $q$, the restriction of $\omega(s,\alpha)$ to $I_q$ is piecewise monotone. 
For instance,  $\omega(s,\alpha)$ is monotone decreasing on $I_{q_5}$, hence piecewise monotone with one piece, while it is piecewise monotone with two pieces on $I_{q_7}$.

\medskip

 \begin{remark}\label{rem:1}
    Given a DFA $D=(Q,s,\Sigma, \delta, F)$  we define  the \emph{identity  transducer $Id_D$ } based on $D$  as follows: 
  \[Id_D = (Q,s,\Sigma,\Sigma,\delta,Id,F, t),\] where $t(q)=\epsilon$, for all $q\in F$.
Then one can easily check that $D$ is a Wheeler DFA iff $Id_D$ is a Wheeler transducer.
\end{remark}

\subsection{Closure Properties}

\begin{definition}
    A function $f:\Sigma^*\rightarrow \Gamma^*$ is said to be \emph{Wheeler-continuous} if, for every Wheeler language $L\subseteq \Gamma^*$,   the inverse image $f^{-1}(L)$ is a Wheeler language.   
\end{definition}

We first prove that Wheeler function are Wheeler-continuous. 
 
\begin{lemma} If $T$ is a Wheeler transducer then $||T||$ is Wheeler-continuous.
\end{lemma}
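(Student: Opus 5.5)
The plan is to use the machine-independent characterization of Wheeler languages from~\cite{alanko_wheeler_2021}. A regular language $K$ is Wheeler iff the \emph{convex refinement} $\equiv^c_K$ of its Myhill--Nerode equivalence has finite index on $\pf{K}$. Here $\alpha\equiv^c_K\alpha'$ means that $\alpha\equiv_K\alpha'$ and every string of $\pf{K}$ lying co-lexicographically between $\alpha$ and $\alpha'$ is also $\equiv_K$-equivalent to them. So, writing $f=||T||$ and $K=f^{-1}(L)$, it suffices to do two things. First, note that $K$ is regular, since preimages of regular languages under subsequential functions are regular. Second, cover $\pf{K}$ by finitely many sets $X_1,\dots,X_m$, each convex (relative to $\pf{K}$) and each contained in a single $\equiv_K$-class. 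Each such $X_j$ then lies inside one $\equiv^c_K$-class, so the index of $\equiv^c_K$ is at most $m$.

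To build the sets, fix a trimmed Wheeler DFA $W=(P,s_W,\Gamma,\delta_W,F_W)$ for $L$, and write $J_p\subseteq\Gamma^*$ for the set of strings reaching $p\in P$. Since $W$ is Wheeler, each $J_p$ is convex in the co-lex order on $\Gamma^*$. By condition 2 of Definition~\ref{def:WheelerTransducers}, each $I_q$ ($q\in Q$) has a finite convex partition $C^q_1,\dots,C^q_{n_q}$ on each piece of which $\alpha\mapsto\omega(s,\alpha)$ is monotone. Put
\[
X_{q,i,p}=\{\alpha\in C^q_i \;:\; \omega(s,\alpha)\in J_p\}.
\]
Convexity follows because the preimage of a convex set under a monotone map (nondecreasing or nonincreasing) on a convex domain is convex, exactly as in the proof of Lemma~\ref{lem:comp_piecewise}, and $J_p$ is convex. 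So $X_{q,i,p}$ is convex in $\Sigma^*$, and hence its trace on $\pf{K}$ is convex relative to $\pf{K}$. There are finitely many such sets.

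Next I would check the covering property. If $\alpha\in\pf{K}$, there is $\beta$ with $f(\alpha\beta)=\omega(s,\alpha)\,\omega(q,\beta)\,t(\delta(q,\beta))\in L$, where $q=\delta(s,\alpha)$. Hence $\omega(s,\alpha)\in\pf{L}$, so it reaches some state $p$ of $W$, and $\alpha\in X_{q,i,p}$ for the appropriate piece $i$. Strings whose output has already left $\pf{L}$ never lie in $\pf{K}$, which avoids having to handle a dead state.

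Finally I would check Nerode equivalence. If $\alpha,\alpha'\in X_{q,i,p}$, then for every $\beta$ we have $\alpha\beta\in dom(f)$ iff $\delta(q,\beta)\in F$ iff $\alpha'\beta\in dom(f)$. Moreover $f(\alpha\beta)\in L$ iff $\delta_W(p,\omega(q,\beta)\,t(\delta(q,\beta)))\in F_W$, and this condition is symmetric in $\alpha,\alpha'$. Hence $\alpha\equiv_K\alpha'$.

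The step I expect to be delicate is the appeal to the characterization of Wheeler languages. I need to confirm that the cited statement uses convexity relative to $\pf{K}$ and the convex refinement exactly as I use it, in particular that a convex set inside one Nerode class is contained in a single $\equiv^c_K$-class. If the available form is instead ``a finite-index right-invariant convex equivalence saturating $K$'', I would fall back on building a Wheeler DFA directly. That DFA would be the product of $A_T$, $W$ and the piece indices. There the real obstacle is that the monotone pieces need not be preserved by transitions, so I would first refine the pieces into a right-invariant family. This could be done by intersecting with the finitely many convex sets $\{\alpha:\alpha a\in X\}$ generated by the $X_{q,i,p}$, and I would then have to argue that this refinement process stabilizes after finitely many steps.
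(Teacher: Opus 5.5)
Your proof is correct in substance, and it takes a genuinely different route from the paper's. One convexity claim needs correcting; that is covered in the second paragraph.

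\textbf{Comparison of routes.} The paper uses the \emph{sequence} characterization of Wheeler languages: a regular $K$ is Wheeler iff every monotone sequence in $\pf{K}$ is eventually constant modulo $\equiv_K$. A monotone sequence $(\alpha_i)$ in $\pf{K}$ eventually stays in one $I_q$, and then in one monotone piece of $I_q$. So $(\omega(s,\alpha_i))$ is a monotone sequence in $\pf{L}$, hence eventually constant modulo $\equiv_L$. The continuation computation of your last step then gives eventual $\equiv_K$-constancy. You instead build a static finite cover $\{X_{q,i,p}\}$ and use the finite-index characterization of the convex refinement $\equiv^c_K$. That refinement does quantify over $\pf{K}$ as you assume, so your fallback is unnecessary. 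If your source also requires $\equiv^c_K$-related words to end with the same letter, intersect each $X_{q,i,p}$ with the convex set of words ending in a fixed letter. Your route needs a fixed Wheeler DFA $W$ for $L$ and some convexity bookkeeping. In return it gives an explicit bound on the index of $\equiv^c_K$, namely (number of monotone pieces of $T$)$\times|P|$, and hence a bound on the size of a minimum Wheeler DFA for $f^{-1}(L)$. The paper's argument is purely qualitative, but it never has to discuss convexity of the $J_p$.

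\textbf{The claim to correct.} $J_p$ is convex only inside $\pf{L}$, not in $\Gamma^*$. Take the Wheeler DFA with transitions $s\xrightarrow{a}q_1$, $s\xrightarrow{c}q_2$, $q_1\xrightarrow{a}p$, $q_2\xrightarrow{a}p$ and final state $p$. It recognizes $L=\{aa,ca\}$ over $a\prec b\prec c$, and $J_p=\{aa,ca\}$, although $aa\prec ba\prec ca$. Likewise, $I_q$ and the pieces $C^q_i$ are in general convex only inside $\pf{dom(f)}$. Consequently $X_{q,i,p}$ need not be convex in $\Sigma^*$: with $T$ the one-state identity transducer on $\{a,b,c\}^*$ and $W$ as above, $X_{q,1,p}=\{aa,ca\}$. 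What you actually need is convexity of $X_{q,i,p}\cap\pf{K}$ inside $\pf{K}$, and that still holds. The missing ingredient is already in your covering step. Suppose $\gamma\in\pf{K}$ lies between $\alpha,\alpha'\in X_{q,i,p}$. Then $\gamma\in C^q_i$, since $\pf{K}\subseteq\pf{dom(f)}$. By monotonicity, $\omega(s,\gamma)$ lies between $\omega(s,\alpha)$ and $\omega(s,\alpha')$. Also $\omega(s,\gamma)\in\pf{L}$ because $\gamma\in\pf{K}$. So $\omega(s,\gamma)\in J_p$ by convexity of $J_p$ within $\pf{L}$. Once you carry out the convexity check inside $\pf{K}$ rather than $\Sigma^*$, your argument is complete.
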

\begin{proof}
Let
\(
L\subseteq \Gamma^*
\)
be a Wheeler language, and set
\(
K=f^{-1}(L).
\)
In order to  prove that \(K\) is Wheeler
we use the following characterization of Wheeler languages: a regular language
\(K\) is Wheeler if and only if every monotone sequence in
\(\operatorname{Pref}(K)\) is eventually constant modulo the
Myhill--Nerode equivalence of \(K\).

Let
\[
\alpha_0,\alpha_1,\alpha_2,\ldots
\]
be a monotone sequence in \(\operatorname{Pref}(K)\). We show that this
sequence is eventually constant modulo \(\equiv_K\).

Since the underlying DFA of \(T\) is Wheeler, the sets
\[
I_q=\{\alpha\in\Sigma^*:\delta(s,\alpha)=q\}
\]
are convex and totally ordered according to colexicographical order of the strings arriving in the states. Moreover from $K=f^{-1}(L)$ we obtain \[\pf{K}\subseteq \pf{dom(f)}.\] Since
there are only finitely many states, every monotone sequence can cross only
finitely many such intervals. Hence, after discarding a finite prefix of the
sequence, we may assume that there is a fixed state \(q\in Q\) such that
\(
\alpha_i\in I_q
\)
for every \(i\). 

By hypothesis,
\(
\omega(s,-)\restriction I_q
\)
is piecewise monotone. Hence there exists a finite convex partition
\(
I_q=C_1\cup\cdots\cup C_m\) with \(
C_1<\cdots<C_m,
\)
such that \(\omega(s,-)\) is monotone on each \(C_j\).   Therefore the sequence
\[
\omega(s,\alpha_0),\omega(s,\alpha_1),\omega(s,\alpha_2),\ldots
\]
is eventually monotone in \(\Gamma^*\). After discarding a further finite prefix, we may therefore assume that
\[
\omega(s,\alpha_0),\omega(s,\alpha_1),\omega(s,\alpha_2),\ldots
\]
is monotone in \(\Gamma^*\).

Now, since
\(
\alpha_i\in \operatorname{Pref}(K),
\)
there exists a word \(\beta_i\in\Sigma^*\) such that
\(
\alpha_i\beta_i\in K.
\)
Thus
\(
f(\alpha_i\beta_i)\in L.
\)
Moreover,
\[
f(\alpha_i\beta_i)
=
\omega(s,\alpha_i)\,\omega(q,\beta_i)\,t(r_i)
\]
for some final state \(r_i\in F\). Hence
\(
\omega(s,\alpha_i)\in \operatorname{Pref}(L).
\)

We have therefore obtained a monotone sequence
\(
\omega(s,\alpha_i)
\)
inside \(\pf{L}\). Since \(L\) is Wheeler, this sequence is
eventually constant modulo the Myhill--Nerode equivalence of \(L\). Hence,
after discarding a final finite prefix, we may assume that
\[
\omega(s,\alpha_i)\equiv_L \omega(s,\alpha_j)
\]
for all \(i,j\).

We now show that
\(
\alpha_i\equiv_K \alpha_j
\)
for all sufficiently large \(i,j\). Let \(\beta\in\Sigma^*\).
Since
\[
\delta(s,\alpha_i)=\delta(s,\alpha_j)=q,
\]
either we cannot read $\beta$ from $q$ and so $\alpha_i\beta, \alpha_j\beta \not \in K$, or
reading the continuation \(\beta\) from \(\alpha_i\) and from \(\alpha_j\)
 produces the same
output
\(
\omega(q,\beta)
\)
and reaches the same state
\(
r=\delta(q,\beta).
\)

If \(r\notin F\), then both \(\alpha_i\beta\) and \(\alpha_j\beta\) are outside
the domain of \(f\), and hence neither belongs to \(K\). If \(r\in F\), then
\[
f(\alpha_i\beta)
=
\omega(s,\alpha_i)\,\omega(q,\beta)\,t(r),
\]
and
\[
f(\alpha_j\beta)
=
\omega(s,\alpha_j)\,\omega(q,\beta)\,t(r).
\]
Since
\[
\omega(s,\alpha_i)\equiv_L \omega(s,\alpha_j),
\]
we get
\[
f(\alpha_i\beta)\in L
\iff
f(\alpha_j\beta)\in L.
\]
Equivalently,
\[
\alpha_i\beta\in K
\iff
\alpha_j\beta\in K.
\]

Thus, for every continuation \(\beta\in\Sigma^*\),
\[
\alpha_i\beta\in K
\iff
\alpha_j\beta\in K.
\]
Therefore
\(
\alpha_i\equiv_K \alpha_j.
\)

We have shown that every monotone sequence in \(\operatorname{Pref}(K)\) is
eventually constant modulo the Myhill--Nerode equivalence of \(K\). By the
characterization of Wheeler languages, \(K\) is Wheeler.
c
Since \(L\) was arbitrary, the function \(f=\|T\|\) is Wheeler-continuous.
 \hfill $\Box$

\end{proof}

\begin{lemma}  Let $T_f, T_g $ be two   Wheeler transducers, realizing the partial function $f:\Sigma^* \rightarrow \Gamma^*,g:\Gamma^*\rightarrow \Delta^*$ respectively, where the alphabet $\Sigma, \Gamma, \Delta$ are ordered. Then the composition $g \circ f$ is realized by a Wheeler transducer.  
\end{lemma}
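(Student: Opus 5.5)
We take the standard product construction for composing subsequential transducers and check that it preserves the two Wheeler conditions. Write $T_f=(Q_1,s_1,\Sigma,\Gamma,\delta_1,\omega_1,F_1,t_1)$ and $T_g=(Q_2,s_2,\Gamma,\Delta,\delta_2,\omega_2,F_2,t_2)$. The product $P$ has state set contained in $Q_1\times Q_2$ and initial state $(s_1,s_2)$. On an input letter $a$ the transition is
\[
\delta((p,q),a)=(\delta_1(p,a),\delta_2(q,\omega_1(p,a))),\qquad \omega((p,q),a)=\omega_2(q,\omega_1(p,a)).
\]
A state $(p,q)$ is final when $p\in F_1$ and $t_1(p)$ leads $q$ to a state of $F_2$; its terminal output is $\omega_2(q,t_1(p))\,t_2(\delta_2(q,t_1(p)))$. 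Then $\|P\|=g\circ f$, and we trim $P$.

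\textbf{Step 1 (making the input order total).} The input order of $P$ need not be total, because distinct product states $(p,q)$, $(p,q')$ can share input strings in the same interval $I_p$ of $T_f$. To fix this we refine states as follows. For each $p$, the map $\alpha\mapsto\omega_1(s_1,\alpha)$ on $I_p$ is piecewise monotone with convex pieces $C_{p,1}<\dots<C_{p,m_p}$. On each piece, the second component $\delta_2(s_2,\omega_1(s_1,\alpha))$ is determined by which interval $J_q$ of $T_g$ contains $\omega_1(s_1,\alpha)$. Since $\omega_1(s_1,-)$ is monotone on $C_{p,i}$ and the sets $J_q$ are convex and totally ordered, each preimage $C_{p,i}\cap\omega_1(s_1,-)^{-1}(J_q)$ is convex. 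Hence $I_p$ is cut into finitely many convex blocks, and the second component is constant on each block.

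We then define the equivalence $\sim$ on $\operatorname{Pref}(dom(g\circ f))$: $\alpha\sim\alpha'$ if and only if they lie in the same such block. This $\sim$ refines $\sim_P$ and has finite index, and its classes are convex and pairwise disjoint, so they are totally ordered. The quotient $P_\sim$ given by Lemma~\ref{lem:sim_quotient} is then equivalent to $P$, and its states are linearly ordered intervals.

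\textbf{Main obstacle: right invariance.} The hard part is showing that $\sim$ is right invariant, or else replacing it by a right-invariant, finite-index, convex refinement. Our first attempt is to use the Wheeler properties $\alpha\prec\beta\Rightarrow\alpha a\prec\beta a$ for the DFAs involved, together with the fact that appending a fixed output word preserves the order between monotone images. If this fails, we take the coarsest right-invariant refinement of $\sim$. Its finiteness would follow from the finiteness of $Q_1\times Q_2\times\{\text{pieces}\}$, with each piece tracked together with its monotonicity direction. Convexity would be argued through the characterization ``monotone sequences in prefixes are eventually constant modulo the class'', as in the continuity lemma.

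\textbf{Step 2 (piecewise monotonicity).} For a state $[\alpha]_\sim$ with first components $p,q$ and $\beta\in[\alpha]_\sim$, Lemma~\ref{lem:sim_quotient} gives
\[
\omega_\sim([\epsilon],\beta)=\omega_2(s_2,\omega_1(s_1,\beta)).
\]
On each block, $\beta\mapsto\omega_1(s_1,\beta)$ is monotone with image inside $J_q$. By condition~2 for $T_g$, $\omega_2(s_2,-)$ is piecewise monotone on $J_q$. So by the argument of Lemma~\ref{lem:comp_piecewise}, the composition is piecewise monotone on the block, and a finite union of such blocks remains piecewise monotone. Hence $P_\sim$ is a Wheeler transducer realizing $g\circ f$.
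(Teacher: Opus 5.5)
There is a genuine gap, and it is the one you flag yourself: right invariance. Lemma~\ref{lem:sim_quotient} needs $\sim$ to be right invariant, so Step~1 does not yet yield a transducer, and neither of your two remedies works as stated.

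Your blocks $C_{p,i}\cap\omega_1(s_1,-)^{-1}(J_q)$ form exactly the partition the paper warns is not right invariant in general. The pieces $C_{p,i}$ are an arbitrary witness of piecewise monotonicity and need not be compatible with appending letters. For example, let $T_f=T_g$ be the one-state identity transducer over $\{a,b\}$ with $a\prec b$. Then $\{\alpha\preceq b\}$, $\{\alpha\succ b\}$ is an admissible choice of pieces. Here $\epsilon$ and $a$ lie in the same block, but $b$ and $ab$ do not.

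Your fallback, the coarsest right-invariant refinement, does have convex classes; that part is easy directly. Its finiteness, however, does not follow from $Q_1\times Q_2\times\{\text{pieces}\}$. The piece containing $\alpha a$ is not determined by the piece containing $\alpha$ together with $a$, and that is precisely the failure of right invariance. Concretely, for the same identity transducer, cut $\{a,b\}^*$ at a left-infinite word $x$ that is not eventually periodic; this is again an admissible choice of pieces. Appending the length-$k$ suffix of $x$ pulls this cut back to the cut at $x$ with its last $k$ letters deleted. These cuts are pairwise distinct, and any two distinct cuts are separated by a finite word. Hence the refinement has infinite index.

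The missing idea is to coarsen rather than refine. Take as states the maximal convex components, inside $P=\pf{dom(g\circ f)}$, of the sets $K_{p,q}$ of inputs reaching the product state $(p,q)$. Your block argument is exactly what is needed here: it shows each $K_{p,q}$ is a finite union of convex sets. So there are finitely many components, and they partition $P$ into convex, hence totally ordered, classes refining $\sim_P$.

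Right invariance is then immediate. If $C$ is such a component and $a\in\Sigma$, then $Ca\cap P$ is convex: appending $a$ preserves $\prec$, and any word of $P$ strictly between two words ending in $a$ itself ends in $a$, with its $a$-free prefix lying in $P$ between the two corresponding elements of $C$. By determinism of the product, $Ca\cap P$ lies in a single $K_{p',q'}$, hence in a single maximal convex component of it.

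Finally, each component is a convex subset of some $I_p$ and a finite union of your blocks. So your Step~2 applies unchanged and gives the piecewise monotonicity condition. This is the paper's proof.
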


\begin{proof} 
    Let
\[
T_f=(Q_f,s_f,\Sigma,\Gamma,\delta_f,\omega_f,F_f, t_f)
\]
and
\[
T_g=(Q_g,s_g,\Gamma,\Delta,\delta_g,\omega_g,F_g,  t_g)
\]
be the two transducers. 
We consider  the usual product construction $T_f\times T_g$ for transducers, realizing the composition $g\circ f$:

 \begin{itemize}
        \item the set of states is $Q_f\times Q_g$;
        \item the initial state is  $(s_f, s_{g})$;
        \item the input transition function is    \(\delta_{T_g\circ  T_f}((q,r),a)=(\delta_f(q,a), \delta_g(r,\omega_f(q,a)))\);
        \item the output transition  function is  \(\omega_{T_g\circ  T_f}((q,r),a)=\omega_g(r, \omega_f(q,a))  \);
        \item the set of final states is  $F_{T_g\circ  T_f}=\{(q,r):q\in F_f  \wedge \delta_g(r, t_f(q))\in F_g\}$;
        \item the terminal function is  \(t_{T_g\circ  T_f}(q,r)= \omega_g(r,t_f(q))t_g(\delta_g(r, t_f(q)))\).  
    \end{itemize}
Then, by induction on the length $|\alpha|$ of $\alpha\in \pf{dom(f)}$, it is possible to check   that the following holds:
\begin{equation}\label{eq:1}\delta_{T_g\circ  T_f}((q,r),\alpha)=(\delta_f(q,\alpha ), \delta_g(r,\omega_f(q,\alpha))),\end{equation} 
\begin{equation}\label{eq:2}\omega_{T_g\circ  T_f}((q,r),\alpha)=\omega_g(r,\omega_f(q,\alpha))  .\end{equation} 
Then it  easily  follows that $T_f\times T_g$ computes the function $g\circ f$. 
We trim   the  transducer $T_f\times T_g$ keeping only
those pairs \((q,r)\) from which some continuation leads to an input word in
\(dom(g\circ f)\). After this trimming, for every remaining pair
\((p,r)\), the   set of input words
\begin{align*}
  K_{p,r} &=
\{\alpha\in\Sigma^*~:~ \delta_{T_g\circ  T_f}((s_f,s_g),\alpha)=(p,r)\}\\
&=
\{\alpha\in\Sigma^*~:~ \delta_f(s_f,\alpha)=p
\text{ and }
\delta_g(s_g,\omega_f(s_f,\alpha))=r
\}
\end{align*}
 
is contained in
\[
P=\pf{{dom}(g\circ f)}.
\]

Equivalently,

\[
K_{p,r}
=
I_p\cap \omega_f(s_f,-)^{-1}(J_r), 
\]
where $I_p$ and $J_r$ are the set of input words arriving in $p$ and $r$ in $T_f$ and $T_g$, respectively.
Since   \(T_f\)  and  \(T_g\)  are Wheeler, each \(I_p\) and  \(J_r\) are convex, while \(K_{p,r}\) need not be so, since \(\omega_f(s_f,-)\) is piecewise monotone   on \(I_p\) and not necessarily monotone.  
However,   there exists a finite convex partition

\[
I_p=C_1\cup\cdots\cup C_m
\]

such that \(\omega_f(s_f,-)\) is monotone on each \(C_i\). This implies that, for each \(i,r\), the
set
\[
C_i\cap \omega_f(s_f,-)^{-1}(J_r)= (\omega_f(s_f,-)\restriction C_i)^{-1}(J_r)
\] is convex, 
being  the inverse image of a convex set
under a monotone function. Therefore \(K_{p,r}\) is a finite union
of convex sets \( C_i\cap \omega_f(s_f,-)^{-1}(J_r),
\) for states $(p,r)$. Moreover, notice that, for all $i,r$,  the function 
 \[\omega_{T_g\circ  T_f}((s_f,s_g),- )=\omega_g(s_g, \omega_f(s_f,-))  \]
 is piecewise monotone over \( C_i\cap \omega_f(s_f,-)^{-1}(J_r)\), being the composition of the two piecewise monotone functions  \[\omega_f(s_f,-))\restriction C_i\cap \omega_f(s_f,-)^{-1}(J_r) ~ \text{and} ~ \omega_g(s_g, -)\restriction J_r\]
 (see Lemma \ref{lem:comp_piecewise}).  However, in order to obtain a Wheeler transducer computing $g \circ f$ we cannot use  as new states the members  sets \( C_i\cap \omega_f(s_f,-)^{-1}(J_r)
\) of the $K_{p,r}$ convex decomposition,  because this partition would not, in general, be right invariant. Nevertheless, if a set has a convex partition realizing the piecewise monotonicity of a function, then any convex partition of the set would realize the piecewise monotonicity of the  function as well. Hence, for our purpose it will be enough to use any finite convex decomposition of $K_{p,r}$ which is right invariant.  In particular, as 
  states of the new transducer, we use  the convex sets belonging to the maximal convex components of some \(K_{p,r}\),    in the ordered set
\(
P=\pf{dom(g\circ f)}.
\)
These convex components are finite in number, because there are finitely many
pairs \((p,r)\), and each \(K_{p,r}\) is a finite union of convex sets.

In other words, we consider   the following equivalence relation $\sim$:
\begin{align*}
 \alpha \sim \alpha' \iff &\exists (p,r) \;\alpha , \alpha' \in K_{p,r} \; \wedge \;\\
 & \alpha, \alpha'\;\text{belong to the same maximal convex component of }\;  K_{p,r}   
\end{align*}

We next prove that $\sim$ is right invariant. 
Notice that, if \(
C\subseteq K_{p,r}
\) is a maximal convex component of \(K_{p,r}\) and  $a\in \Sigma$ then   \(Ca\cap P\)   is contained in  \(K_{p',r'}\)  where $(p,a,\omega_f(p,a),p')$ is a $T_f$ transition and   
\(\delta_g(r,\omega_f(p,a))=r' \) for some $r'$.
Moreover, \(Ca \cap P\) is convex. Indeed, appending the same final letter \(a\)
preserves the co-lexicographic order. Thus if \(Ca\cap P\neq \emptyset\) then \(Ca\cap P\) must be  contained in a unique maximal convex component of
\(K_{p',r'}\).

By construction, $\sim$ is refinement of $\sim_{T_f\times T_g}$ and hence we can apply Lemma \ref{lem:sim_quotient} obtaining a   quotient transducer $T_\sim$ still computing $g\circ f$.

It remains to check that this transducer is Wheeler.

The set of words reaching a state \(C\) of the refined automaton is exactly
\(C\). Since the states form a partition of \(P\) into pairwise disjoint convex
sets, they are totally ordered by the co-lexicographic order. Therefore, for any
two states \(C,D\), either every word of \(C\) precedes every word of \(D\), or
conversely. This is precisely the Wheeler order of the underlying DFA.

Finally, we verify the piecewise monotonicity condition on each new state.
Let
\(
C\subseteq K_{p,r}
\)
be one of the new states. For \(\alpha\in C\), the  output produced by
the composition is
\[
\omega_h(s_h,\alpha)
=
\omega_g(s_g,\omega_f(s_f,\alpha)).
\]

Since \(C\subseteq I_p\), the function
\(
\omega_f(s_f,-)\restriction C
\)
is piecewise monotone. Moreover, since \(C\subseteq K_{p,r} \subseteq   \omega_f(s_f,-)^{-1}(J_r)\), we have
\(
\omega_f(s_f,C)\subseteq J_r.
\)

By assumption,
\[
\omega_g(s_g,-)\restriction J_r
\]
is piecewise monotone. Since the composition of two piecewise monotone
functions is piecewise monotone, it follows that
\(
\omega_h(s_h,-)\restriction C
\)
is piecewise monotone.

Thus the refined product transducer computes \(g\circ f\), has a Wheeler
underlying DFA, and satisfies the   statewise piecewise monotonicity
condition.

Therefore the Wheeler class is closed under composition.
\hfill $\Box$
\end{proof}

\subsection{Minimization}
 In this section we will adapt Choffrut construction (Def.\ref{def:synct_congruence}, Def.\ref{def:Choffrut}) to the Wheeler class. 
 To obtain the corresponding syntactic congruence $\sim_f^c$  for Wheeler transducers, we need a convexity constraint on $\sim_f$, together with an order constraint on $\widehat f$. In the following, we denote by $[\alpha]_f$ the equivalence class of the word $\alpha$ with respect to $\sim_f$ and, for words $\mu, \mu' \in \pf{dom(f)}$,   we denote by $[\mu, \mu' ]^\pm$ the following interval in $(\pf{dom(f)}, \prec)$:
 \[
 [\mu, \mu' ]^\pm = 
 \begin{cases}
 [\mu, \mu']=\{\beta \in \pf{dom(f)}\;:\; \mu \preceq \beta \preceq \mu'\} \; \text{if}\; \mu \leq \mu'\\
  [\mu', \mu]=\{\beta \in \pf{dom(f)} \;:\; \mu' \preceq \beta \preceq \mu\}\;\text{if}\; \mu' \leq \mu 
 \end{cases}
 \]

\begin{definition}
Let $f:\Sigma^* \rightarrow \Gamma^*$ be a partial function. Then the  relation $\sim^c_f$ on $\pf{dom(f)}$ is defined as follows:
 
 \[
  \alpha \sim^c_f \alpha' ~~ \Longleftrightarrow  ~~
     [ \alpha, \alpha' ]^\pm \subseteq [\alpha]_f\; \wedge \;
      \widehat  f \; \text{ is piecewise monotone   over }  [ \alpha, \alpha' ]^\pm
 \]

\end{definition}

\begin{remark}\label{rem:eq_convex}
    Notice that $\sim^c_f$ is an equivalence relation. In particular, it is transitive: suppose $\alpha\sim^c_f\alpha'$ and $\alpha'\sim^c_f\alpha''$. Since  \([ \alpha, \alpha'' ]^\pm\subseteq [ \alpha, \alpha' ]^\pm \cup  [ \alpha', \alpha'' ]^\pm \)  we obtain that  \( [ \alpha, \alpha'' ]^\pm\subseteq [\alpha]_f\)  and \(\widehat  f \; \text{is piecewise monotone   over } [ \alpha, \alpha'' ]^\pm\). The last implication uses the fact that if \(D\) is convex and
\(D\subseteq C_1\cup C_2\), where \(f\) is piecewise monotone on both
\(C_1\) and \(C_2\), then \(f\) is piecewise monotone on \(D\).
Moreover, notice that the equivalence classes of $\sim^c_f$ are convex sets in $\pf{dom(f)}$. 
 
If $T$ is a transducer computing $f$,  $q$ is a $T$-state, and   $q=\delta(s,\alpha)$,  let  \(y=\bigwedge_{ \delta(q,\gamma)\in F } \omega(q, \gamma)t(\delta(q,\gamma)).\) Then, for all $\alpha \ \in I_q$ it holds: 
\begin{align*}\widehat f(\alpha)= \bigwedge_{\alpha\gamma\in  {dom(f)}}f(\alpha \gamma )&= \bigwedge_{\alpha\gamma\in  {dom(f)}} \omega(s,\alpha \gamma )t(\delta(s, \alpha\gamma))=\\
&=\omega(s, \alpha) \bigwedge_{\delta(q,\gamma)\in F}  \omega(q, \gamma)t(\delta(q,\gamma))=\omega(s, \alpha)  y,  \end{align*} where $y$ does not depend on $\alpha$ but only on $q$. 
It follows that

\begin{align*}
    &\omega \text{\; is (piecewise) monotone on \; } I_q    \iff \\ &\widehat f  \text{\; is (piecewise) monotone on \; } I_q    \iff \\ &  f  \text{\; is (piecewise) monotone on \; } I_q \text{\;(when $q$ is final)}.  
\end{align*}
\end{remark}

 \begin{lemma}\label{lem:right_invariance}
     Given a partial function  $f:\Sigma^* \rightarrow \Gamma^*$ the equivalence  $\alpha \sim_f^c \alpha'$   is right-invariant.
 \end{lemma}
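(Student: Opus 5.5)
We must show: if $\alpha \sim^c_f \alpha'$ and $a\in\Sigma$ with $\alpha a, \alpha' a \in \pf{dom(f)}$, then $\alpha a \sim^c_f \alpha' a$. (By Remark~\ref{rem:eq_convex}, $\alpha\sim_f\alpha'$ already gives $\alpha a\in\pf{dom(f)} \iff \alpha' a\in\pf{dom(f)}$.) Assume w.l.o.g. $\alpha \preceq \alpha'$; appending the same letter preserves the co-lex order, so $\alpha a \preceq \alpha' a$. Two things must be checked: $[\alpha a,\alpha' a]\subseteq [\alpha a]_f$, and $\widehat f$ is piecewise monotone on $[\alpha a,\alpha' a]$.

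The key structural step is that every $\beta\in[\alpha a,\alpha' a]$ ends with $a$. If $\beta = \epsilon$ or $\beta$ ends with some $b\ne a$, then comparing last letters gives $\beta\prec\alpha a$ or $\alpha' a\prec\beta$: if $b\prec a$ then $\beta\prec \alpha a$, if $a\prec b$ then $\alpha' a\prec\beta$, and $\epsilon$ precedes everything. So $\beta=\gamma a$. Moreover $\alpha a\preceq \gamma a\preceq \alpha' a$ forces $\alpha\preceq\gamma\preceq\alpha'$, since co-lex comparison of $\gamma a$ with $\alpha a$ reduces to comparing $\gamma$ with $\alpha$. Also $\gamma\in\pf{dom(f)}$ because $\gamma a$ is. Hence the interval $[\alpha a,\alpha' a]$ is exactly $\{\gamma a : \gamma\in[\alpha,\alpha'],\ \gamma a\in\pf{dom(f)}\}$, i.e.\ it is the image of a subset of $[\alpha,\alpha']$ under the order-preserving map $\gamma\mapsto\gamma a$.

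For the first condition, each such $\gamma$ satisfies $\gamma\sim_f\alpha$ by hypothesis. Right invariance of $\sim_f$ (proved by Choffrut and used for $T_f$) then gives $\gamma a\sim_f\alpha a$, so $[\alpha a,\alpha' a]\subseteq[\alpha a]_f$.

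For the piecewise monotonicity of $\widehat f$, the main obstacle is relating $\widehat f(\gamma a)$ to $\widehat f(\gamma)$. Since $\gamma\sim_f\alpha$, Remark~\ref{rem:eq_convex} gives, for every continuation $\beta$, $f(\gamma\beta)=\widehat f(\gamma)x_\beta$ with $x_\beta$ independent of $\gamma$ in the class. Taking the meet over $\beta$ of the form $a\beta'$ yields $\widehat f(\gamma a)=\widehat f(\gamma)\,y$, where $y=\bigwedge_{\beta'} x_{a\beta'}$ depends only on the class of $\alpha$ and on $a$. This mirrors the computation $\widehat f(\alpha)=\omega(s,\alpha)y$ in Remark~\ref{rem:eq_convex}. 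Appending a fixed suffix $y$ preserves the co-lex order and equality, so $\widehat f(\gamma a)$ is monotone in $\gamma$ wherever $\widehat f(\gamma)$ is.

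Now take the convex partition $C_1,\dots,C_m$ of $[\alpha,\alpha']$ realizing piecewise monotonicity of $\widehat f$. The sets $C_i a\cap\pf{dom(f)}$ are convex in $[\alpha a,\alpha' a]$, because the map $\gamma\mapsto\gamma a$ is an order isomorphism onto the interval by the structural step. They partition the interval, and on each of them $\widehat f(\gamma a)=\widehat f(\gamma)y$ is monotone of the same type. Hence $\alpha a\sim^c_f\alpha' a$.
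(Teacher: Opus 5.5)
Your proof is correct and follows the paper's argument essentially step for step: you reduce to $\alpha\prec\alpha'$, use right invariance of $\sim_f$ for the interval condition, and use the convex partition $C_i a\cap\operatorname{Pref}(dom(f))$, on which $\widehat f(\gamma a)=\widehat f(\gamma)y$ inherits monotonicity. You also justify two points the paper only asserts: that every element of $[\alpha a,\alpha' a]$ ends in $a$, and that the suffix $y$ relating $\widehat f(\gamma a)$ to $\widehat f(\gamma)$ is the same throughout the $\sim_f$-class.
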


\begin{proof}  If  $\alpha\sim_f^c\alpha'$  and $a\in \Sigma$, notice that $\alpha a \in \pf{dom(f)}$  iff $\alpha' a \in \pf{dom(f)}$. Hence we may suppose that both $\alpha a, \alpha'a \in \pf{dom(f)}$ and prove that  $\alpha a\sim_f^c\alpha' a$. 
For simplicity we only consider the case in which $\alpha \prec \alpha'$, the other case being   analogous.

    We begin by proving the first condition, that is $[ \alpha a, \alpha'a ]^\pm=[ \alpha a, \alpha'a ]  \subseteq [\alpha a]_f $.
   If $\beta\in [ \alpha a, \alpha'a ]^\pm =[ \alpha a, \alpha'a ] $  then   $\alpha a \prec \beta \prec \alpha'a$  and  $\beta = \gamma a$, for some $\gamma \in \Sigma^*$ with  $\alpha \preceq \gamma \preceq \alpha'$.  Since $\alpha \sim_f^c \alpha'$   we have   $[ \alpha , \alpha' ]\subseteq [\alpha]_f$ so that  $\alpha \sim_f \gamma$ holds. By Choffrut results we know  that $\sim_f$ is right invariant so $\alpha \sim_f \gamma \implies \forall a \in \Sigma^* \;\alpha a \sim_f \gamma a$. We can conclude that $ \alpha a \sim_f \beta$ holds.

   Next we prove that, under the hypothesis that $\alpha  \sim_f^c \alpha'  $, the function  $\widehat f$ is piecewise monotone over the interval 
   $[ \alpha a, \alpha'a ]$. From   $\alpha\sim_f^c \alpha'$  we know  that  $\widehat f$ is piecewise monotone  on $[\alpha, \alpha']$. Moreover, \([ \alpha a, \alpha'a ]=[\alpha, \alpha']a\cap \pf{dom(f)}\); hence, if 
   \([\alpha, \alpha']=C_1~ \cup~ \dots~ \cup~ C_k\) is a \([\alpha, \alpha']\)-convex partition   such that $\widehat  f$ is monotone over any  $C_i$s  and    \(D_i=C_ia\cap  \pf{dom(f)}\), then   \([\alpha, \alpha']a \cap \pf{dom(f)} =D_1 ~\cup ~ \dots~ \cup~  D_k\) is a  \([\alpha a, \alpha'a ]\)-convex partition. We next prove  that $\widehat f$ is monotone over the $D_i's$.   Indeed, by  Remark \ref{rem:eq_convex},  since $ C_i\subseteq [\alpha]_f$ there exists $x$ such that if $\beta =\gamma a  \in D_i$ with $\gamma \in C_i $ then $\widehat f(\beta)=\widehat f(\gamma)x $, so that $\widehat f$ inherits monotonicity on $D_i$ from monotonicity on $C_i$.   
\hfill $\Box$

\end{proof}

The congruence $\sim_f^c$ allows us to establish a Myhill--Nerode theorem for Wheeler  transducers. First we prove that, if $f$ is a function which is  computable via a Wheeler transducer,   the syntactic congruence $\sim_f^c$   has finite index.

 \begin{lemma}\label{lem:Wheeler_versus_finite}  If $f:\Sigma^*\rightarrow \Gamma^*$ is a function realized by a Wheeler transducer then $\sim_f^c$ has finite index  and
 $\widehat f$ is piecewise monotone over each $\sim_f^c$-class $[\alpha]_f^c$.
 \end{lemma}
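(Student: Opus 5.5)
Let $T=(Q,s,\Sigma,\Gamma,\delta,\omega,F,t)$ be a Wheeler transducer with $\|T\|=f$, trimmed as usual, so that $\pf{dom(f)}=\bigcup_{q\in Q} I_q$. The plan is to show that the partition of $\pf{dom(f)}$ into the sets $I_q$ is (almost) a refinement of $\sim_f^c$. Since $Q$ is finite, this will bound the index of $\sim_f^c$. The piecewise monotonicity of $\widehat f$ on each $\sim_f^c$-class will then follow from the piecewise monotonicity of $\omega(s,-)$ on the finitely many $I_q$ that the class meets.

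\textbf{Step 1.} I first show that $\alpha,\alpha'\in I_q$ implies $\alpha\sim_f^c\alpha'$. Because the order $<_{\mathrm{in}}$ is total, the sets $I_q$ are pairwise disjoint convex sets that are totally ordered, and together they cover $\pf{dom(f)}$. Hence $[\alpha,\alpha']^\pm\subseteq I_q$. Since $\sim_T$ refines $\sim_f$, we get $I_q\subseteq[\alpha]_f$. By Remark~\ref{rem:eq_convex}, for all $\gamma\in I_q$ we have $\widehat f(\gamma)=\omega(s,\gamma)\,y_q$, where $y_q$ depends only on $q$. Right concatenation by a fixed word preserves the (weak) monotonicity of the co-lex order on outputs. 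Therefore the piecewise monotonicity of $\omega(s,-)$ on $I_q$ (condition~2 of Definition~\ref{def:WheelerTransducers}) transfers to $\widehat f$ on $I_q$, and hence to its convex subset $[\alpha,\alpha']^\pm$; restricting the convex pieces to a convex subset keeps them convex. This gives $\alpha\sim_f^c\alpha'$. So every $\sim_f^c$-class is a union of sets $I_q$, and the index is at most $|Q|$.

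I expect the main obstacle to be the claim that right concatenation by $y_q$ preserves monotonicity. In the co-lex order, $u\preceq v$ does not in general imply $uy\preceq vy$: for example, $u$ may be a proper suffix of $v$ while $uy$ and $vy$ compare differently. I would check this carefully.
\begin{itemize}
\item If co-lex comparison is decided at a position inside the common suffix, appending the same $y$ preserves the comparison.
\item If $u$ is a proper suffix of $v$, then $uy$ is a proper suffix of $vy$, so again $uy\prec vy$.
\end{itemize}
Thus co-lex is right-compatible, and the transfer is fine. The same check applies to the analogous step in the proof of Lemma~\ref{lem:right_invariance}.

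\textbf{Step 2.} Let $C=[\alpha]_f^c$. By Step~1, $C$ is a union of at most $|Q|$ convex sets $I_q$, and $C$ is convex by Remark~\ref{rem:eq_convex}. On each such $I_q$, $\widehat f$ is piecewise monotone, as shown in Step~1. Taking all the pieces over all these $I_q$ gives a finite convex partition of $C$ on which $\widehat f$ is monotone piece by piece. Hence $\widehat f$ is piecewise monotone over $C$, which completes the lemma.
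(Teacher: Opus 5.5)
Your proposal is correct and follows the paper's proof essentially step for step. You show that $\sim_T$ refines $\sim_f^c$ using the convexity of $I_q$, the inclusion $I_q\subseteq[\alpha]_f$, and the transfer $\widehat f(\gamma)=\omega(s,\gamma)y_q$ from the Remark, and then you assemble the convex pieces over the finitely many $I_q$ that make up a $\sim_f^c$-class. Your explicit check that the co-lex order is compatible with right concatenation ($u\preceq v$ implies $uy\preceq vy$) is correct, and it justifies a step that the paper's Remark uses without comment.
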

 
\begin{proof}
    Let $T = (Q,s,\Sigma,\Gamma,\delta,\omega,F,t)$ be a Wheeler transducer   with $||T||=f$. Consider the  equivalence $\sim_T$   on $\pf{dom(f)}$  defined as in  (\ref{eq:simT}). 
 
We will  prove that $\sim_T$ is a refinement of $\sim^c_f$. Since $\sim_T $ has finite index, this will imply that $\sim^c_f$ has finite index as well. 

Suppose $\alpha \sim_T \alpha' $ and $\alpha \prec \alpha'$. We have to prove that $[\alpha, \alpha']\subseteq  [\alpha]_f$ and that  $\widehat f$ is piecewise monotone over $[\alpha, \alpha']$. Since $(Q,s,\Sigma,\delta,F)$ is a Wheeler DFA, the set $I_{\delta(s,\alpha)}$ is convex. Since $I_{\delta(s,\alpha)}\subseteq [\alpha]_f$, this implies $[\alpha, \alpha']\subseteq  [\alpha]_f$. 

Let    $q=\delta(s,\alpha)$. Since $T$ is a Wheeler transducer, we know that $ \omega$ is piecewise monotone over $I_q$. By Remark \ref{rem:eq_convex} we obtain that $\widehat f$ is also 
 piecewise monotone over $I_q$. Moreover $I_q$ is a convex set, and $\alpha, \alpha'\in I_q$, so $[ \alpha, \alpha']\subseteq I_q$. It follows that $\widehat f$    is piecewise monotone over $[\alpha, \alpha']$, as required. 
This concludes the proof that  $\sim_T$ is a refinement of $\sim_f^c$.  

To  prove that $\widehat f$ is piecewise monotone over each $\sim_f^c$-class $[\alpha]_f^c$, notice that, as proved before, $\widehat f$ is piecewise monotone over each $I_q$, which are  convex sets, and each  $\sim_f^c$-class is a finite union of some $I_q$.  
\end{proof}
 \hfill $\Box$

\begin{lemma} \label{lem:finite_versus_Wheeler} Let  $f:\Sigma^* \rightarrow \Gamma^*$. If    the equivalence \(\sim^c_f\) has finite index and $\widehat f$ is piecewise monotone over each class $[\alpha]_f^c$ then $f$ is a Wheeler function.
\end{lemma}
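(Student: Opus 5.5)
The plan is to build a Choffrut-style transducer whose states are the $\sim_f^c$-classes, in the spirit of Definition~\ref{def:Choffrut}. Concretely, I would take the transducer $T^c_f$ with states $Q=\{[\alpha]_f^c : \alpha\in\pf{dom(f)}\}$ and initial state $[\epsilon]_f^c$. (If $\epsilon\notin\pf{dom(f)}$ the domain is empty and the claim is trivial.) Its components are:
\begin{itemize}
\item transitions $\delta([\alpha]^c_f,a)=[\alpha a]^c_f$ whenever $\alpha a\in\pf{dom(f)}$;
\item outputs $\omega([\alpha]^c_f,a)=\widehat f(\alpha)^{-1}\widehat f(\alpha a)$;
\item final states those classes contained in $dom(f)$;
\item terminal output $t([\alpha]^c_f)=\widehat f(\alpha)^{-1}f(\alpha)$;
\item initial output $\widehat f(\epsilon)$, prepended by means of a fresh initial state $s$ as in Definition~\ref{def:Choffrut}.
\end{itemize}

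Well-definedness follows the Choffrut argument.
\begin{itemize}
\item $\delta$ is well defined by Lemma~\ref{lem:right_invariance}.
\item $\sim_f^c$ refines $\sim_f$, since $[\alpha,\alpha']^\pm\subseteq[\alpha]_f$. So by Remark~\ref{rem:eq_convex}, for $\alpha\sim_f^c\alpha'$ we have $\widehat f(\alpha)^{-1}f(\alpha\beta)=\widehat f(\alpha')^{-1}f(\alpha'\beta)$, and taking meets gives $\widehat f(\alpha)^{-1}\widehat f(\alpha a)=\widehat f(\alpha')^{-1}\widehat f(\alpha' a)$. Hence $\omega$ and $t$ do not depend on the representative.
\item $F$ is well defined by the domain clause of Remark~\ref{rem:eq_convex}.
\item The cancellation computation from Choffrut's lemma then gives $\omega(s,\alpha)=\widehat f(\alpha)$ and $\|T^c_f\|=f$.
\end{itemize}
Finiteness of the state set is exactly the finite-index hypothesis, and the automaton is trim by construction.

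Next I check condition~1 of Definition~\ref{def:WheelerTransducers}. The set $I_{[\alpha]^c_f}$ of words reaching the state $[\alpha]_f^c$ is, by induction on length using $\delta$, exactly the class $[\alpha]_f^c$ itself. The fresh initial state is reached only by $\epsilon$, which is the co-lex minimum. As noted in Remark~\ref{rem:eq_convex}, the $\sim_f^c$-classes are convex subsets of the ordered set $\pf{dom(f)}$. Pairwise disjoint convex subsets of a linearly ordered set are totally ordered blockwise, so for two distinct classes $C,D$ either every word of $C$ precedes every word of $D$ or conversely. Hence $<_{in}$ is total.

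Condition~2 is immediate from the hypothesis: on $I_{[\alpha]^c_f}=[\alpha]^c_f$ we have $\omega(s,\beta)=\widehat f(\beta)$, and $\widehat f$ is assumed piecewise monotone on each class.

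The main obstacle I expect is the bookkeeping around the initial state and the identity $I_q=[\alpha]_f^c$. Two points need care. The first is that $[\epsilon]^c_f$ may also be reached by other words, which is why I use a separate initial state and keep convexity within $\pf{dom(f)}$. The second is making sure the output formula telescopes correctly even when $\widehat f(\alpha)$ is undefined outside $\pf{dom(f)}$. Both are handled by restricting every definition to $\pf{dom(f)}$, exactly as in Definition~\ref{def:Choffrut}.
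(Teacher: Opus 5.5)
Your proposal is correct and is essentially the paper's proof. The paper obtains exactly your transducer as the quotient (Lemma~\ref{lem:sim_quotient}) of Choffrut's $T_f$ by the refinement $\sim'$ of $\sim_f^c$ that isolates $\epsilon$, so it inherits well-definedness instead of re-checking it by hand; your one bookkeeping slip is that $I_{[\epsilon]_f^c}=[\epsilon]_f^c\setminus\{\epsilon\}$ (still convex since $\epsilon$ is the co-lex minimum), and that state is unreachable and must be dropped when $[\epsilon]_f^c=\{\epsilon\}$, which the $\sim'$ formulation handles automatically.
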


\begin{proof} Since $\sim_f^c$ has finite index and it is a refinement of $\sim_f$, we have that $\sim_f$ has finite index.  Consider  the   refinement \(\sim'\) of  \(\sim^c_f\)  isolating $\epsilon$ in a single class:
\[\alpha\sim'\alpha' \iff (\alpha, \alpha'\neq \epsilon \land \alpha \sim_f^c \alpha')  \lor (\alpha=\alpha'=\epsilon)
\] 
Then $\sim'$ is   right invariant (because $\sim_f^c$ is so, by Lemma \ref{lem:right_invariance}). Moreover, $\sim'$ is a refinement of the relation $\sim_{T_f}$ corresponding to the Choffrut    transducer  \(T_f=(Q_f,s_f,\delta_f,\omega_f,F_f,t_f)\). We can hence apply     Lemma \ref{lem:sim_quotient}. The quotient transducer $T_{\sim'}=(Q,s,  \delta, \omega , F,t)$, has the $\sim'$ equivalence classes as states, computes the function $f$ and, for    $\alpha\in \pf{dom (f)} $,  we have:
\begin{align*}
\delta(s,\alpha)&=[\alpha]_{\sim'}\\
   I_{[\alpha]_{\sim'}}&=[\alpha]_{\sim'}   \\ 
   \omega (s,\beta)&=\omega_{T_f}(s_f, \beta).
\end{align*}    
 
By Remark \ref{rem:eq_convex} the \(\sim_f^c\)-classes are convex and this implies, since $\epsilon$ is minimum in the colexicographical order, that  the same is true for the $\sim'$ equivalence classes which are   the states of the transducer $T_{\sim'}$. It follows that  these states can be ordered
colexicographically and  $A_{T_{\sim'}}$ is Wheeler.

\medskip

As for the  second Wheeler condition, notice that   by hypothesis   \(\widehat f\) is piecewise monotone   over the $\sim_f^c$ classes. Moreover  $\sim_f^c$ classes  coincide with $\sim'$ classes,  except for   $[\epsilon]_f^c$    which is splitted into  $\{\epsilon\}$ and  $[\epsilon]_f^c\setminus \{\epsilon\}$.  Since \(I_{[\alpha]_{\sim'}}=[\alpha]_{\sim'}\),  it is then clear that \(\widehat f\) is piecewise monotone over the sets $I_{[\alpha]_{\sim'}}$. 
By Remark \ref{rem:eq_convex} we obtain that $\omega$ is piecewise monotone over   $I_{[\alpha]_{\sim'}}$. This concludes the proof that $T_{\sim'}$ is a Wheeler transducer. Since  $T_{\sim'}$ computes $f$, the Lemma follows. 
 \hfill $\Box$
\end{proof}

Using Lemma \ref{lem:Wheeler_versus_finite}  and Lemma \ref{lem:finite_versus_Wheeler}  we can finally prove our   characterization of Wheeler transducers:

\begin{theorem}\label{thm:subseq_wheeler_finite_index} A partial function  $f:\Sigma^* \rightarrow \Gamma^*$ is computed  by a   Wheeler transducer if and only if \(\sim^c_f\) has finite index and $\widehat f$  is piecewise monotone over each  \(\sim^c_f\)-class. 
\end{theorem}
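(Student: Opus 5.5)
The theorem splits into its two directions, and each is exactly one of the two lemmas just proved. I would therefore write the proof as a direct combination of Lemma~\ref{lem:Wheeler_versus_finite} and Lemma~\ref{lem:finite_versus_Wheeler}, checking only that their hypotheses and conclusions match the statement.

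For the ``only if'' direction, assume $f=\|T\|$ for some Wheeler transducer $T$. Lemma~\ref{lem:Wheeler_versus_finite} gives both required conclusions: $\sim_f^c$ has finite index, and $\widehat f$ is piecewise monotone over each class $[\alpha]_f^c$. The underlying argument, which I would recall briefly, has two parts.
\begin{itemize}
\item $\sim_T$ refines $\sim_f^c$, because each $I_q$ is convex (by Wheelerness) and contained in a single $\sim_f$-class. Also, $\widehat f=\omega(s,-)\,y_q$ on $I_q$ (Remark~\ref{rem:eq_convex}), so $\widehat f$ inherits piecewise monotonicity from $\omega$.
\item Each $\sim_f^c$-class is a finite union of convex sets $I_q$, so piecewise monotonicity holds on it by gluing the finitely many convex pieces.
\end{itemize}

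For the ``if'' direction, assume $\sim_f^c$ has finite index and $\widehat f$ is piecewise monotone on each class. Lemma~\ref{lem:finite_versus_Wheeler} builds the Wheeler transducer $T_{\sim'}$ as follows.
\begin{itemize}
\item Finite index of $\sim_f^c$ implies finite index of $\sim_f$. By Lemma~\ref{lem:right_invariance}, $\sim_f^c$ is right-invariant.
\item Isolate $\epsilon$ to obtain $\sim'$, and form the quotient of Choffrut's $T_f$ via Lemma~\ref{lem:sim_quotient}. This quotient is equivalent to $f$, and its states are convex classes with $I_{[\alpha]_{\sim'}}=[\alpha]_{\sim'}$.
\item Convex, pairwise disjoint classes are totally ordered co-lexicographically, so the underlying DFA is Wheeler.
\item Piecewise monotonicity of $\widehat f$ transfers to $\omega(s,-)$ on each state, again by Remark~\ref{rem:eq_convex}.
\end{itemize}

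The main obstacle, if one re-examines rather than cites, is the gluing step used in both directions. If a convex set $D$ is covered by finitely many convex sets, on each of which a function is piecewise monotone, the function must be piecewise monotone on $D$. I would handle this by intersecting the given convex partitions with $D$. This yields finitely many convex pieces covering $D$ on which the function is monotone. Refining them into a partition is possible because in a linear order the intersection and difference of two intervals is a finite union of intervals. A second point to check is that splitting $\{\epsilon\}$ off $[\epsilon]_f^c$ keeps both parts convex, which holds because $\epsilon$ is the co-lexicographic minimum. With these checked, the theorem follows immediately from the two lemmas.
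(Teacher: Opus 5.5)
Your proposal is correct and matches the paper, which obtains the theorem as the direct combination of Lemma~\ref{lem:Wheeler_versus_finite} (``only if'') and Lemma~\ref{lem:finite_versus_Wheeler} (``if''); your recaps also match: $\sim_T$ refines $\sim_f^c$ via convexity of the $I_q$, and the quotient of Choffrut's $T_f$ by $\sim'$ with $\epsilon$ isolated. Your extra checks (gluing convex pieces, convexity after splitting off $\epsilon$) are sound, and in the ``only if'' direction the gluing is simpler still, since the $I_q$ are disjoint and the pieces' partitions just combine.
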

 
Using the previous theorem we may also obtain a characterization of Wheeler functions   that highlights the monotonicity property directly on the function $f$, rather than on $\widehat f$:

\begin{theorem}\label{transducerFree}
  A  sequential function $f:\Sigma^*\rightarrow \Gamma^*$ is a Wheeler function if and only if  every  monotone sequence  in $\pf{dom(f)}$  eventually lies in a single  $\sim_f$-class  and $f$ is piecewise  monotone.
\end{theorem}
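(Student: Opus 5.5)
We derive the statement from Theorem~\ref{thm:subseq_wheeler_finite_index} in the converse direction, and directly from a Wheeler transducer $T=(Q,s,\Sigma,\Gamma,\delta,\omega,F,t)$ computing $f$ in the forward direction.

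\emph{Forward direction.} Let $(\alpha_i)$ be a monotone sequence in $\pf{dom(f)}$. The sets $I_q$ are finitely many and pairwise disjoint, they are convex, and they are totally ordered by $<_{\mathrm{in}}$. A monotone sequence can therefore leave each $I_q$ at most once, so it eventually lies in a single $I_q$. Since $\sim_T$ refines $\sim_f$, it eventually lies in a single $\sim_f$-class.

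For piecewise monotonicity of $f$ we use that $dom(f)$ is the disjoint union of the sets $I_q$ with $q\in F$. On such an $I_q$ we have $f(\alpha)=\omega(s,\alpha)t(q)$, so by the final equivalence in Remark~\ref{rem:eq_convex} $f$ is piecewise monotone on $I_q$. Each piece is convex in $\pf{dom(f)}$, hence also convex in $dom(f)$. Gathering the finitely many pieces from all final $q$ gives a finite convex partition of $dom(f)$ that realizes the piecewise monotonicity of $f$.

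\emph{Converse direction.} Since $f$ is sequential, $\sim_f$ has finite index. The first step is to show that every $\sim_f$-class $K$ has only finitely many maximal convex components in $(\pf{dom(f)},\prec)$. Suppose instead that $K$ has infinitely many. These components are linearly ordered, so we can extract an infinite monotone chain of them, say increasing, $E_1<E_2<\cdots$. Choose $c_i\in E_i$. Because $E_i$ and $E_{i+1}$ are distinct maximal convex components, there is some $d_i\notin K$ with $c_i\prec d_i\prec c_{i+1}$. The sequence $c_1,d_1,c_2,d_2,\dots$ is monotone, yet it is not eventually contained in one $\sim_f$-class. This contradicts the hypothesis.

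The second step is to show that $\widehat f$ is piecewise monotone on each such component $E$.
\begin{itemize}
\item Fix $\alpha_0\in E$ and some $\beta$ with $\alpha_0\beta\in dom(f)$.
\item By Remark~\ref{rem:eq_convex}, for every $\alpha\in E$ we have $\alpha\beta\in dom(f)$ and $f(\alpha\beta)=\widehat f(\alpha)x$ for a single $x$ independent of $\alpha$. This is because $\alpha\sim_f\alpha_0$ and $\widehat f(\alpha)^{-1}f(\alpha\beta)$ is constant on the class.
\item The map $\alpha\mapsto\alpha\beta$ is strictly co-lex increasing.
\item Removing a fixed common suffix $x$ preserves co-lex order in both directions.
\item Take a convex partition of $dom(f)$ on whose blocks $f$ is monotone, and pull it back along $\alpha\mapsto\alpha\beta$. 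The preimages of convex sets under an order-preserving map are convex, so we get a finite convex partition of $E$ on which $\widehat f$ is monotone.
\end{itemize}

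The third step identifies the classes. By the second step, any two elements of $E$ are $\sim_f^c$-equivalent, since the interval between them lies in $E\subseteq K$ and $\widehat f$ is piecewise monotone there. Conversely, every $\sim_f^c$-class is convex and contained in a single $\sim_f$-class. Hence the $\sim_f^c$-classes are exactly these components. It follows that $\sim_f^c$ has finite index and that $\widehat f$ is piecewise monotone on each $\sim_f^c$-class, so Theorem~\ref{thm:subseq_wheeler_finite_index} applies.

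I expect the main obstacle to be the alternation argument in the first step. One has to check carefully that maximality of distinct components really forces a separating element outside $K$ inside $\pf{dom(f)}$. One also has to handle the case where the components accumulate in decreasing order, which is symmetric.
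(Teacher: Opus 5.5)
Your proof is correct and rests on the same two ingredients as the paper's. The first is the interleaved monotone sequence that alternates between a $\sim_f$-class and its complement. The second is the transfer of piecewise monotonicity from $f$ to $\widehat f$ along a fixed continuation $\beta$ with $f(\alpha\beta)=\widehat f(\alpha)x$. The difference is only organizational: you identify the $\sim_f^c$-classes directly with the maximal convex components of the $\sim_f$-classes and pull the partition back along $\alpha\mapsto\alpha\beta$. The paper instead argues by contradiction from infinitely many $\sim_f^c$-classes inside one $\sim_f$-class, and splits on which clause of $\not\sim_f^c$ fails.
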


\begin{proof} 
    ($\implies$) 
    Let $f:\Sigma^*\to\Gamma^*$ be a Wheeler function and let \[T=(Q,s,\Sigma,\Gamma,\delta,\omega,F,t)\] be a Wheeler transducer realizing $f$. 
    Consider a  monotone sequence  $(\alpha_i)_{i\in\mathbb N}$   in $\pf{dom(f)}$.
    Since   $T$ is a Wheeler DFA, the sets $I_q$ for $q\in Q$ are convex and form an ordered partition of $\pf{dom(f)} $. Hence   there exist $n_1\geq 0$ and a state $q\in Q$ such that $\delta(s,\alpha_i)=q$ for all $i\geq n_1$.  This implies  $\alpha_i\sim_f\alpha_j$ for all $i,j\geq n_1$  and  proves that $(\alpha_i)_{i\in\mathbb N}$ ends up, eventually, in a single $\sim_f$-class.

   Moreover,  since $T$ is Wheeler, the function   
   $\omega(s,-)$ is piecewise monotone over each $I_q$.   If $q\in F$  and $\alpha \in I_q$ then we have $f(\alpha)=\omega(s, \alpha)t(q)$. Hence for $q\in F$ we have that  $f$ is piecewise monotone on   $I_q$, implying that $f$ is piecewise monotone over $dom(f)=\bigcup_{q\in F} I_q$. 
   
\medskip

($\impliedby$) Here we apply Theorem \ref{thm:subseq_wheeler_finite_index}, proving that   the  congruence relation $\sim_f^c$    has finite index and that $\widehat f$ is piecewise monotone over its classes.  Suppose that $\sim_f^c$ does not have a finite number of classes, then, since $f$ is a sequential function and   $\sim_f$ has finite index,      there must  be   a class of $\sim_f$ containing  an infinite number of classes of $\sim_f^c$. If  we call these classes $[\alpha_1]_f^c,[\alpha_2]_f^c,\dots$ we know that   $\alpha_i \not\sim_f^c \alpha_j$ and $\alpha_i\sim_f \alpha_j$,  $\forall i\neq j$.

  Consider a monotone  subsequence    $(\alpha'_i)_{i \in \mathbb{N}}$  of the sequence  $(\alpha_i)_{i \in \mathbb{N}}$ and suppose     that  $(\alpha'_i)_{i \in \mathbb{N}}$ is increasing (the decreasing case is analogous).  Since, for all $i \in \mathbb{N}$, we have    $\alpha'_i \not\sim_f^c \alpha'_{i+1}$, it must be the case that  
    \[\forall i  \left ([\alpha'_i, \alpha'_{i+1}]\not \subseteq [\alpha'_i]_f ~\lor ~\widehat f ~\text{is not piecewise monotone on} ~ [\alpha'_i, \alpha'_{i+1}]\right )\]

    If, for some $i$,  the first alternative fails, then the interval $[\alpha'_i, \alpha'_{i+1}]$ is contained in one
\(\sim_f\)-class and it must be the case that    $\widehat f $  is not piecewise monotone on $[\alpha'_i, \alpha'_{i+1}]$.  Since $[\alpha'_i, \alpha'_{i+1}]\subseteq [\alpha'_i]_f $ and $[\alpha'_i, \alpha'_{i+1}]\subseteq \pf{dom(f)}$,  there exist  $\beta,x$ such that $[\alpha'_i, \alpha'_{i+1}]\beta \subseteq dom(f)$ and $f(\gamma\beta)=\widehat f(\gamma) x$, for all $\gamma\in [\alpha'_i, \alpha'_{i+1}]$. This implies that $f$ is not piecewise monotone on $[\alpha'_i, \alpha'_{i+1}]\beta \subseteq dom(f)$, a contradiction because $f$ is piecewise monotone over $dom(f)$ and $[\alpha'_i, \alpha'_{i+1}]\beta $ is a convex subset of  $dom(f)$. 

Hence, we must have that $[\alpha'_i, \alpha'_{i+1}]\not \subseteq [\alpha'_i]_f$, for all $i$, that is,    \[\forall i  ~\exists \beta_i \in \pf{dom(f)} ~( \alpha'_i\prec \beta_i \prec \alpha'_{i+1} \land (\alpha'_i \not \sim_f \beta_i).\] Then the monotone sequence $\alpha'_{0} \prec  \beta_{ 0} \prec \alpha'_{ 1} \prec  \beta_{1} \prec  \dots$  does not  end  up, eventually, in the same $\sim_f$ class, contradicting the hypothesis. 

Finally, we want to prove that $\widehat f$ is piecewise monotone over each   $\sim_f^c$-class  $[\alpha]_f^c$. We know that  $[\alpha]_f^c\subseteq [\alpha]_f$, hence there exists $\beta$ with $[\alpha]_f^c\beta\subseteq dom(f)$, and there exists $x$ such that
$f(\gamma \beta)=\widehat f (\gamma)x$, for all $\gamma\in [\alpha]_f^c$.  Since $[\alpha]_f^c\beta$ is convex on $dom(f)$ and by hypothesis $f$ is piecewise monotone on $dom(f)$, we have that $f$ is piecewise monotone on $[\alpha]_f^c\beta$, and, reasoning as before, we obtain that $\widehat f$ is piecewise monotone on $[\alpha]_f^c$. 
    \hfill $\Box$
\end{proof}

\begin{remark}\label{rem:not_all}
In the previous sections we proved that a function computed by a   Wheeler transducer is always   Wheeler-continuous. 
  There are two independent requests for membership in  the class of Wheeler transducers: 1) the underlying DFA must be Wheeler; 2) the output  function $\omega(s,-)$ must be piecewise monotone over each $I_q$. If we drop one of the two conditions, we may lose the Wheeler continuity   of the computed function. E.g.,   the transducer $T$ over $\Sigma=\{a\}, \Gamma=\{a \}$ with transitions 
$(q_0,a,a,q_1), (q_1,a,a, q_0), t(q_0)=\epsilon$  computes the identity function over $dom(||T||)$, hence it satisfies the second  condition but it is not Wheeler-continuous since   $dom(T)=Id^{-1}(\Sigma^*)=a^{2n}$, and $a^{2n}$ is not Wheeler. 

Moreover, if we drop the second condition, we may still lose Wheeler continuity, as the following transducer shows: 
Let
\(
\Sigma=\Gamma=\{0,1\},
\) with $0\prec 1$.
Define the total function
\(
f:\Sigma^*\to \Gamma^*
\)
by letterwise complementation:
\(
f(w)=\overline w,
\)
where
\(
\overline 0=1,
\overline 1=0.
\) 
Then \(f\) is subsequential, realized by the one-state transducer with transitions $(q,0,1, q), (q,1,0,q)$ whose underlying DFA is Wheeler but \(f\) is not piecewise monotone and is not Wheeler-continuous.
$f$ is not piecewise monotone because along    the  strict increasing sequence \[
\alpha_{2k}=001^k,
\qquad
\alpha_{2k+1}=1^{k+1}
\qquad(k\ge 0) 
\]
 the outputs alternate:
\[
11\succ 0\prec110\succ00\prec1100\succ 000\prec\cdots.
\]
 
In order to prove that \(f\) is not Wheeler-continuous, consider the  Wheeler language
\[
L=0^*\cup 110^*1\subseteq \Gamma^*
\]

Since \(f\) exchanges \(0\) and \(1\), we have
\[
f^{-1}(L)=1^*\cup 001^*0,
\]
and this language   is not Wheeler.

Hence the proposed  class of Wheeler transducers is, in a sense minimal. 
However, the Wheeler class does not compute    all subsequential  Wheeler-continuous functions, even among monotone functions,  as the following example shows.
Let 
\(
\Sigma=\{a\},
\Gamma=\{a\},
\)
and define 
\[
f(a^n)=a^{\lfloor n/2\rfloor}.
\]
Then 
\[
f(\varepsilon)=\varepsilon,\quad
f(a)=\varepsilon,\quad
f(aa)=a,\quad
f(aaa)=a,\quad
f(aaaa)=aa,\quad \ldots
\]

The function \(f\) is realized by the subsequential transducer 
\[
s \xrightarrow{a\mid \varepsilon} q ,
\qquad
q  \xrightarrow{a\mid a} s,
\]
where all states are final and  $t(s)=t(q)=\epsilon$.
Moreover  \(f\) is monotone nondecreasing.
 Since Wheeler  language over a unary alphabet    are only the finite or cofinite sets, it is also easy to check that \(f\) is Wheeler-continuous.

We now show that \(f\) is not computed by any Wheeler transducer.
Suppose, by way of a contradiction,   that  \(f\) is computed by a Wheeler transducer. 
Then, there exists $N$ and a state $q$ such that $\delta(s,a^n)=q$, for all $n\geq N$.  
Hence, the transducer must contain a transition 
\((q,a,\alpha, q)\)
with  \(\alpha\in\Gamma^*\). This means that,  for all  \(n\geq N\), we have 
\(\omega(s,a^{n+1})=\omega(s,a^n)\alpha
\)
and, if  $t(q)=\alpha'$, we should have $f(a^{N+k})=\omega(s,a^N)\alpha^k\alpha'$. 
However, this 
means that, over the sequence \(a^{N+k}\), the function $f$ should either be constant (if $\alpha=\epsilon$), or grow w.r.t. the length at each step and  this is not true for the function \(f(a^n)=a^{\lfloor n/2\rfloor}\), a contradiction.    
\end{remark}
 
\section{Conclusions and Future Work}

The study initiated here paves the way for a broader research program
on the structural theory of Wheeler transducers.

Several fundamental problems concerning decidability and complexity remain unresolved. The most immediate is the decidability of membership of a   transducer in the class of Wheeler transducers and the complexity of such a decision procedure. A related problem is that of Wheeler realizability: given an arbitrary   transducer, can one decide whether an equivalent Wheeler transducer exists? The complexity of the minimization procedure induced by $\sim_f^c$ likewise remains open.

A further direction for future work is to generalize the notion of Wheeler transducers beyond the present setting. In particular, it would be interesting to develop a hierarchy of   transducers mirroring the hierarchy of regular languages induced by the deterministic width of their co-lexicographic order (\cite{cdpp23}). Such a hierarchy could provide a finer classification of   transducers, relating their structural complexity to the degree to which their underlying automata depart from the Wheeler case. This may also lead to new algorithmic techniques for indexing and processing transductions with bounded co-lexicographic width.

Beyond the deterministic case, a natural extension is the study of Wheeler relations, analogously to rational relations in the regular setting, obtained by dropping the determinism requirement. In order to proceed in this direction, it is necessary to develop a theory of nondeterministic Wheeler transducers and to understand which properties of rational relations survive in the Wheeler setting.
Finally, a logical characterization of Wheeler functions, analogous to the
connection between regular functions and monadic second-order logic, is another
direction for future work.

  \subsubsection*{Acknowledgments}
 We thank the anonymous referee   for a careful reading of the manuscript and for
 pointing out a serious  flaw in a previous  definition of $\sim_f^c$.

\section*{Declaration on Generative AI}
During the preparation of this work, the authors used ChatGPT to check grammar and spelling, rephrase and reword text, and create examples. After using this tool, the authors reviewed and edited the content as needed and take full responsibility for the publication’s content.
 
%
%
\bibliographystyle{splncs04}
\bibliography{biblio}

\end{document}